\newtheorem{remark}{Remark}
\newtheorem{corollary}{Corollary}
\newtheorem{lemma}{Lemma}
\newtheorem{definition}{Definition}
\newtheorem{theorem}{Theorem}
\begin{document}
 
 \title{Cryptographic Enforcement of Information Flow Policies\\ without Public Information}

 \author[*]{Jason Crampton}
 \author[*]{Naomi Farley}
 \author[*]{Gregory Gutin}
 \author[*]{Mark Jones}
 \author[**]{Bertram~Poettering}
 \affil[*]{Royal Holloway, University of London}
 \affil[**]{Ruhr University Bochum} 

 \maketitle
 
 \newcommand{\set}[1]{\left\{#1\right\}}
\newcommand{\card}[1]{\left|#1\right|}
\newcommand{\itemref}[1]{\textnormal{R\ref{#1}}}
\newcommand{\uset}[1]{\ensuremath{\mathord{\uparrow}#1}}
\newcommand{\dset}[1]{\ensuremath{\mathord{\downarrow}#1}}
\newcommand{\defeq}{\mathrel{\stackrel{\rm def}{=}}}
\newcommand{\dirpath}[3]{#1 \rightsquigarrow_{#3} #2}
\newcommand{\nodirpath}[3]{#1 \not\rightsquigarrow_{#3} #2}
\newcommand{\dom}{\set{0,1}^{\rho}}
\newcommand{\concat}{\parallel}
\newcommand{\keygen}[1]{\prf(s(#1), #1)}
\newcommand{\secgen}[2]{\prf(s(#1), #2)}

\newcommand{\eofg}{E_0^*}
\newcommand{\eofh}{E_0}

\newcommand{\setup}{\mathsf{SetUp}}
\newcommand{\derive}{\mathsf{Derive}}
\newcommand{\prf}{F}
\newcommand{\hash}{\mathsf{Hash}}

\newcommand{\getsr}{\gets_R}
\newcommand{\NN}{\mathbb{N}}
\newcommand{\keysp}{\mathcal{K}}
\newcommand{\cD}{\mathcal{D}}
\newcommand{\cA}{\mathcal{A}}
\newcommand{\Expt}{\mathrm{Expt}}
\newcommand{\Adv}{\mathrm{Adv}}
\newcommand{\kist}{{\sf kist}}
\newcommand{\ExpCorrupt}{\mathit{Corrupt}}
\newcommand{\ExpKeys}{\mathit{Keys}}


\begin{abstract}
  The enforcement of access control policies using cryptographic primitives has been studied for over 30 years.
  When symmetric cryptographic primitives are used, each protected resource is encrypted and only authorized users are given the decryption key.
  Hence, users may require many keys.
  In most schemes in the literature, keys are derived from a single key explicitly assigned to the user and publicly available information.
  Recent work has challenged this design by developing schemes that do not require public information, the trade-off being that a user may require more than one key.
  However, these new schemes, which require a chain partition of the partially ordered set on which the access control policy is based, generally require more keys than necessary.
  Moreover, no algorithm is known for determining the best chain partition to use.
  In this paper we define the notion of a tree-based cryptographic enforcement scheme, which, like chain-based schemes, requires no public information but simultaneously has lower storage requirements.
  We formally establish that the strong security properties of recent chain-based schemes are preserved by tree-based schemes, and provide an efficient construction for deriving a tree-based enforcement scheme from a given policy that minimizes the number of keys required.
\end{abstract}

 \section{Introduction}\label{sec:intro}
 
 Access control is a fundamental security service in modern computing systems.
 Informally, an access control system filters attempts by users to interact with protected resources, only allowing those interactions that are \emph{authorized} by a \emph{policy}, which is configured by the resource owner(s).
 Implementations of access control in software are vulnerable to compromise of the machine hosting the software.
 Moreover, such enforcement mechanisms do not work when protected resources are stored by an untrusted or semi-trusted third party, as is increasingly common.
  
 In some situations, therefore, we may wish to use cryptographic techniques to enforce some form of access control.
 Such an approach is useful when data objects have the following characteristics: read often, by many users; written once, or rarely, by the owner of the data; and transmitted over unprotected networks.
 In such circumstances, protected data (objects) are encrypted and authorized users are given the appropriate cryptographic keys.
 When cryptographic enforcement is used, the problem we must address is the efficient and accurate distribution of encryption keys to authorized users.

 In recent years, there has been a considerable amount of interest in \emph{key encrypting} or \emph{key assignment} schemes.
 In such schemes, a user is given a secret value -- typically a single key -- which enables the user to derive some collection of encryption keys which decrypt the objects for which she is authorized.
 Key derivation is performed using the secret value and some information made publicly available by the scheme administrator.
 These schemes are particularly suitable for policies that can be represented in terms of information flow.
 
%
Ideally, such a scheme should minimize the amount of public information and the time required to derive a key.
Unsurprisingly, it is not possible to realize both objectives simultaneously, so trade-offs have been sought.
Most schemes in the literature assume that each user is supplied with a single key from which other keys are derived with the help of some information published by the scheme administrator (see~\cite{CrMaWi06} for a survey of such schemes).
In 2010, Crampton {\em et al.}~\cite{CrDaMa10} introduced a new type of scheme in which users may receive several keys.
The significant advantage of this scheme is that no public information is required.
Moreover, the simplicity of the underlying structure of the scheme makes it possible to prove the scheme possesses very strong security properties~\cite{FrPaPo13}.
 

 An information flow policy is defined by a partially ordered set $X$ and a function mapping users and resources to elements in $X$.
 Most key assignment schemes are derived directly from $X$.
 The innovation introduced by Crampton {\em et al.} was to consider a partition of $X$ into chains (or total orders).
 It is particularly easy to work with chains, but the partition breaks some of the ``connectivity'' of the partial ordering.
 These breaks are ``repaired'' by issuing more than one key to some users.
%
 However, one question that remains open is how best to choose the chain partition of a partially ordered set: there may be many such partitions and different choices may lead to chain partition schemes with different characteristics.
 
 In this paper, we show that it is possible to work with trees, rather than chains, without reintroducing the need for public information, resulting in much more space-efficient key assignment.
 We define a tree-based, cryptographic enforcement scheme and provide a rigorous construction for such schemes from a given partially ordered set.
 We identify a number of different parameters that may be important in the context of a tree-based enforcement scheme.
 In particular, we consider the total number of keys that may be required in such a scheme and prove that a tree-based enforcement scheme with a minimal number of keys can be constructed in time $O(\card{X}^2)$.
 We show that a tree-based enforcement scheme for a given $X$ will typically require fewer keys than a chain-based scheme.
 Moreover, we present an efficient algorithm for computing the best choice of tree from the information flow policy, in contrast to chain-based methods (which assume that a chain partition is given).
 
 Our approach is based on constructing a weighted directed acyclic graph from $X$ and then constructing a minimum weight spanning out-tree from the graph.
 We establish a number of results about this out-tree that are likely to provide the foundation for further study of tree-based enforcement schemes.
 
 In the next section, we introduce notation, relevant background material and related work.
 Then, in Sec.~\ref{sec:tbes}, we define a tree-based enforcement scheme, provide a method for constructing such schemes for a given information flow policy, and prove that all the resulting schemes have the property of strong key indistinguishability.
 In Sec.~\ref{sec:minimizing-keys}, we address the problem of finding a tree-based enforcement scheme that minimizes the total number of keys required to enforce a given policy, culminating in a polynomial-time algorithm for computing such a scheme.
 We conclude the paper with a summary of our contributions and some suggestions for future work.
 Those proofs that are useful in understanding our constructions are given in the body of the paper.
 The remainder, including the security proof for our construction (which extends an earlier proof by Freire {\em et al.}~\cite{FrPaPo13}), are in the appendix.
 
 \section{Background and Related Work}\label{sec:background}

 In this paper, we consider the cryptographic enforcement of access control policies.
 In particular, we focus on the enforcement of information flow policies using symmetric cryptographic primitives.%
 \footnote{There exists a large body of work on the enforcement of attribute-based policies using asymmetric cryptographic primitives, notably attribute-based encryption~\cite{BeSaWa07,GoPaSaWa06}.}

  \subsection{Definitions and Notation}\sloppy

  A \emph{directed graph} (or \emph{digraph}) $G = (V(G),E(G))$ is defined by a \emph{vertex set} $V(G)$ and an \emph{arc set} $E(G) \subseteq V(G) \times V(G)$.
  An arc in $E(G)$ is written in the form $xy$, where $x,y \in V(G)$.
  A \emph{directed path} is a sequence of arcs $v_1 v_2,v_2 v_3,\dots,v_{p-2} v_{p-1}, v_{p-1} v_p$, which we may also write as the sequence of vertices $v_1 v_2 \dots v_p$ through which the path passes.
  We write $\dirpath{x}{y}{G}$ if there exists a directed path from $x$ to $y$ in $G$.
  For all $x \in V(G)$, we define $\dirpath{x}{x}{G}$.
  
  The \emph{in-degree} of a vertex $v \in V(G)$ is defined to be the number of arcs of the form $uv$ in $E(G)$.
  Given an undirected rooted tree, we may orient each edge in such a way that the root has in-degree $0$ and all other vertices have in-degree $1$; the resulting (acyclic) digraph is called an \emph{out-tree}.
  Thus if a directed path exists between a pair of two vertices in an out-tree then it is unique.	
  $H$ is a \emph{spanning subgraph} of a graph $G$ if $V(H) = V(G)$.
  A \emph{spanning out-tree} is a spanning subgraph that is an out-tree.
  
  A \emph{partially ordered set} or \emph{poset} is a pair $(X,\leqslant)$, where $\leqslant$ is a binary, reflexive, anti-symmetric, transitive relation.
  Given a poset $(X,\leqslant)$, we write $x < y$ if $x \leqslant y$ and $x \ne y$; and we may write $x \geqslant y$ if $y \leqslant x$.
  We write $x \lessdot y$ and say $y$ \emph{covers} $x$ if $x < y$ and there does not exist $z \in X$ such that $x < z < y$.
  We say $x$ is \emph{incomparable} to $y$, denoted $x \shortparallel y$, if $x \not\leqslant y$ and $y \not\leqslant x$.
  We say $Y \subseteq X$ is an \emph{antichain} if for all $x,y \in Y$, either $x = y$ or $x \shortparallel y$: $Y$ is a \emph{maximum} antichain if $\card{Y} \geqslant \card{Z}$ for every other antichain $Z \subseteq X$; the \emph{width} of $X$ is the cardinality of a maximum antichain.
  
  Given a poset $(X,\leqslant)$, we define the graph $H = (X,\eofh)$, where $xy \in \eofh$ if and only if $x \gtrdot y$.
  $H$ is called the \emph{Hasse diagram} of $(X,\leqslant)$ and is a directed acyclic graph.
  A Hasse diagram of a simple poset is shown in Fig.~\ref{fig:example-poset} (on page~\pageref{fig:example-poset}).
  We may also define the graph $H^* = (X,\eofg)$, where $xy \in \eofg$ if and only if $x > y$.
  The graph $H^*$ is obtained by taking the transitive closure of $H$.

  An \emph{information flow policy} is defined by a partially ordered set of security labels $(X,\leqslant)$, a set of users $U$, a set of (protected) objects $O$, and a security function $\lambda : U \cup O \rightarrow X$.
  We say $u \in U$ is \emph{authorized} to read $o \in O$ if $\lambda(u) \geqslant \lambda(o)$~\cite{BeLa76}.

 \subsection{Basic Methods of Cryptographic Enforcement}\label{subsec:basic-methods}

  A natural way to enforce an information flow policy is to define a cryptographic key $\kappa(x)$ for each $x \in X$, encrypt object $o$ with $\kappa(\lambda(o))$ and give $u$ (or enable $u$ to derive) all keys $\kappa(x)$ such that $x \leqslant \lambda(u)$.
  More specifically, let $G = (X,E(G))$ be an acyclic directed graph such that  $\eofh \subseteq E(G) \subseteq \eofg$.
  Then the transitive closure of $G$ is equal to $H^*$ and $\dirpath{x}{y}{H}$ if and only if $\dirpath{x}{y}{G}$. 
  By publishing key derivation information for each arc in $E(G)$, it is possible to derive $\kappa(y)$ from $\kappa(x)$ if $\dirpath{x}{y}{G}$.
  Thus, the total amount of key derivation information required is proportional to $|E(G)|$, while the number of key derivations will depend on the lengths of the directed paths in $G$.
  We provide a more formal account of the functionality required of a cryptographic enforcement scheme in Sec.~\ref{subsec:formalization}.
 
 Typically, key derivation information is generated using an appropriate symmetric cryptographic algorithm~\cite{AtBlFaFr09}: for arc $xy \in E(G)$, the inputs to the cryptographic algorithm will include $\kappa(x)$ and $\kappa(y)$.
 We write $Enc(m,\kappa)$ to denote the encryption of message $m$ with key $\kappa$.
 There are three very well known ways to implement cryptographic enforcement of information flow policies~\cite{CrMaWi06}:
 
 \begin{description}
  \item[Basic] -- give $u$ the set of keys $\set{\kappa(x) : x \leqslant \lambda(u)}$;
  \item[Iterative] -- give $u$ a single key $\kappa(\lambda(u))$ and publish $\set{Enc(\kappa(x),\kappa(y)) : x \lessdot y}$;
  \item[Direct] -- give $u$ a single key $\kappa(\lambda(u))$ and publish $\set{Enc(\kappa(x),\kappa(y)) : x < y}$.
 \end{description}

 We may evaluate different implementations by considering a number of parameters.
 Let $k(x)$ be the number of keys required by a user associated with $x$.
 Then we write $k$ to denote the maximum value of $k(x)$ taken over all $x$ and $K$ to denote $\sum_{x \in X} k(x)$.
 We write $p$ to denote the number of items of public information,\footnote{It is assumed that the structure of the poset $(X,\leqslant)$ is known to all participants of a cryptographic enforcement scheme.} and $d$ to denote the number of key derivation operations a user may be required to perform to derive a key.
 Let $n$ denote the cardinality of $X$.
 Then the characteristics of the three schemes described above are summarized in Table~\ref{tbl:basic-schemes}.
 
  \begin{table}[h]\setlength{\tabcolsep}{15pt}
  \[
    \begin{array}{rrrrrr}  
    \toprule
     \textbf{Scheme} & \textbf{Keys for $u$} & K & k & p & d \\
    \midrule
     \text{Basic} & ~\set{\kappa(x) : x \leqslant \lambda(u)} & ~n + \card{\eofg} & ~O(n) & 0 & 0 \\
     \text{Iterative} & \set{\kappa(\lambda(u))} & n & 1 & ~\card{\eofh} & ~O(n) \\
     \text{Direct} & \set{\kappa(\lambda(u))} & n & 1 & ~\card{\eofg} & 1 \\
    \bottomrule
    \end{array}
  \]
  \caption{How the parameters of various key assignment schemes vary}\label{tbl:basic-schemes}
  \end{table}  
 
 Naturally, there is a trade-off between the amount of public information we need to compute and store centrally, and the number of key derivation operations that are required.
 The direct scheme, for example, minimizes the cost of key derivation at the expense of an increase in public information.
 Consider the example in Fig.~\ref{fig:example-poset}: the Hasse diagram of the poset has $8$ vertices and $10$ arcs, and the width of the poset is $2$; the graph of the transitive closure has $23$ arcs. 
 
 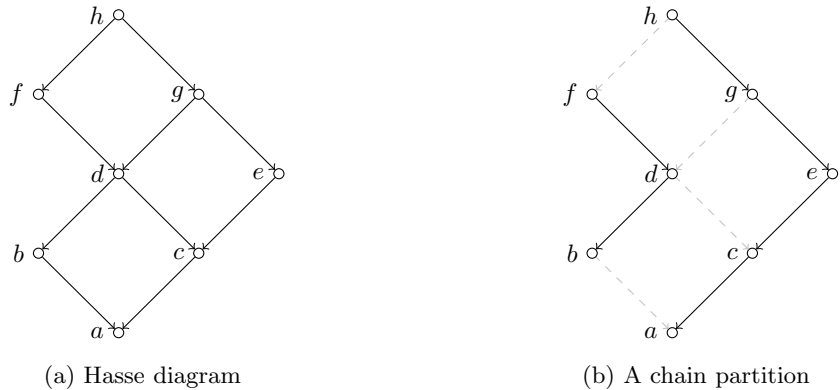
\begin{figure}[t]\centering
  \begin{subfigure}[b]{.45\textwidth}\centering
    \begin{tikzpicture}[v/.style={circle,draw,fill=white,inner sep=0pt,minimum width=4pt},<-,scale=.95,transform shape]
      \node[v,label=left:$a$] (a) {};
      \node[v,above left=of a,label=left:$b$] (b) {};
      \node[v,above right=of a,label=left:$c$] (c) {};
      \node[v,above right=of b,label=left:$d$] (d) {};
      \node[v,above right=of c,label=left:$e$] (e) {};
      \node[v,above left=of d,label=left:$f$] (f) {};
      \node[v,above right=of d,label=left:$g$] (g) {};
      \node[v,above left=of g,label=left:$h$] (h) {};
      \draw (a) -- (b);
      \draw (a) -- (c);
      \draw (b) -- (d);
      \draw (c) -- (d);
      \draw (c) -- (e);
      \draw (d) -- (f);
      \draw (d) -- (g);
      \draw (e) -- (g);
      \draw (f) -- (h);
      \draw (g) -- (h);
    \end{tikzpicture} 
   \caption{Hasse diagram}\label{subfig:hasse-diagram-illustrative}
  \end{subfigure}
  \begin{subfigure}[b]{.45\textwidth}\centering
    \begin{tikzpicture}[v/.style={circle,draw,fill=white,inner sep=0pt,minimum width=4pt},<-,de/.style={dashed,color=gray!50,very thin,<-},scale=.95,transform shape]
      \node[v,label=left:$a$] (a) {};
      \node[v,above left=of a,label=left:$b$] (b) {};
      \node[v,above right=of a,label=left:$c$] (c) {};
      \node[v,above right=of b,label=left:$d$] (d) {};
      \node[v,above right=of c,label=left:$e$] (e) {};
      \node[v,above left=of d,label=left:$f$] (f) {};
      \node[v,above right=of d,label=left:$g$] (g) {};
      \node[v,above left=of g,label=left:$h$] (h) {};
      \draw[de] (a) -- (b);
      \draw (a) -- (c);
      \draw (b) -- (d);
      \draw[de] (c) -- (d);
      \draw (c) -- (e);
      \draw (d) -- (f);
      \draw[de] (d) -- (g);
      \draw (e) -- (g);
      \draw[de] (f) -- (h);
      \draw (g) -- (h);
    \end{tikzpicture} 
   \caption{A chain partition}\label{subfig:chain-partition}
  \end{subfigure}
  \caption{The Hasse diagram of a simple poset $(X,\leqslant)$ and a chain partition}\label{fig:example-poset}
 \end{figure}
 
 More complex schemes have been devised to reduce the number of derivation operations by increasing $\card{E(G)}$~\cite{AtBlFr07,Cr11,SaFeMa08}.
 In particular, Atallah {\em et al.} introduced a scheme for policies where $X$ is a total order, in which the number of derivation operations was no greater than 2 and $\card{E(G)} = O(\card{X} \log \card{X})$~\cite{AtBlFr07}.
 Crampton extended these ideas to arbitary interval-based access control policies~\cite{Cr11}.

 \subsection{Chain Partition Techniques}\label{subsec:chain-partitions}
 
 We may consider other ways of enforcing an information flow policy.
 Crampton {\em et al.} observed that one possibility is to decompose a partially ordered set $(X,\leqslant)$ into disjoint chains and then use one-way functions to derive keys~\cite{CrDaMa10}.
 In this case, the arc set $E(G) \subseteq E_0$ and the transitive closure of $G$ (the graph representing the chain partition) is not necessarily equal to $H^*$  (as illustrated in Fig.~\ref{subfig:chain-partition}, in which deleted arcs are shown as gray dashed lines).
 
 The advantage of such a scheme is that no public information is required.
 We simply select a key for the top element in each chain and then use a (public) one-way function $F$ to iteratively compute the keys for the remaining elements in each chain.
 In particular, if $x \lessdot y$ in a chain, then $\kappa(x) = F(\kappa(y))$.%
 \footnote{This method is not appropriate for arbitrary posets because we may have $y \lessdot x$ and $y \lessdot z$~\cite{CrMaWi06}.}
 Thus a user can simply derive keys by repeated applications of the one-way function. 
 The trade-off in this case is that the user may need as many as $w$ keys, one for each of $w$ chains.
 In Fig.~\ref{subfig:chain-partition}, for example, a user assigned to vertex $d$ will require $\kappa(d)$ and $\kappa(c)$.
 In short, it may be advantageous to eliminate public information, in which case each user may require multiple keys to support key derivation.
 
 \subsection{Formalization and Constructions}\label{subsec:formalization}
 
 Recent work has formalized the security properties required of a cryptographic enforcement scheme (CES) for information flow policies~\cite{AtBlFaFr09,AtSaFeMa06,FrPaPo13}.
 Atallah {\em et al.} introduced the concepts of \emph{key recovery} and \emph{key indistinguishability}~\cite{AtBlFaFr09}.
 The former, informally, is the requirement that a coalition of users $V \subseteq U$ (the ``adversary'') can derive $\kappa(x)$ only if there exists $v \in V$ such that $\lambda(v) \geqslant x$.
 In other words, compromising users cannot lead to non-derivable keys being compromised.
 This is, essentially, the weakest security requirement that one might require of a CES.
 The schemes described in Sec.~\ref{subsec:basic-methods} have this property (provided the encryption scheme has reasonable properties).
 
 However, in the interests of integrating a CES with other cryptographic tools, the stronger notion of indistinguishability was introduced.
 This property requires that the adversary cannot distinguish between $\kappa(x)$ and a random string (of the same length).
 The schemes discussed in Sec.~\ref{subsec:basic-methods} do not have this property (see~\cite{AtBlFaFr09}, for example). 
 
 Informally, treating encryption keys as ``just another encrypted data object'' cannot be the basis for a robust cryptographic enforcement scheme.
 Specifically, the derivation of keys has to be separated from the decryption of data objects.
 We achieve this by introducing a secret value $\sigma(x)$ for each $x \in X$ from which $\kappa(x)$ may be derived.
 More formally, a CES for $(X,\leqslant)$ comprises the $\setup$ and $\derive$ algorithms, the first being used to generate keys and the data used to derive keys, and the second to derive keys.
 Let $\keysp$ denote an arbitrary key space (typically $\keysp=\set{0,1}^l$ for some $l\in\NN$). 
 \begin{itemize}
  \item $\setup$ takes as input a security parameter~$\rho$ and a poset $(X,\leqslant)$ associated with an information flow policy.
	It outputs, for each element $x \in X$, a pair $(\sigma(x), \kappa(x))$: $\sigma(x)$ is used to derive keys $\kappa(y)\in\keysp$, where $y \leqslant x$; and $\kappa(x)$ is used to encrypt data objects associated with security label~$x$.
  	The $\setup$ algorithm also outputs a set of public information $\sf Pub$, which is used to support key derivation.%
	\footnote{In some schemes, it may be the case that $\kappa(y) = \sigma(y)$ for all $y \in X$; and in some schemes, it may be that the set of public information is empty.}
  \item $\derive$ takes as input $(X,\leqslant)$, $\sf Pub$, start and end points $x,y \in X$ and $\sigma(x)$. 
	
	It outputs $\kappa(y)\in\keysp$ if and only if $y \leqslant x$.
	(In particular, $\kappa(x)$ can be derived from $\sigma(x)$.)
 \end{itemize}

 Atallah {\em et al.} described a CES in which two keys $\tau(x)$ and $\kappa(x)$ are derived from $\sigma(x)$ using a pseudorandom function and $(\tau(y),\kappa(y))$ is directly derivable from $\tau(x)$ only if $y \lessdot x$.
 (Thus, $\kappa(y)$ is iteratively derivable from $\sigma(x)$ if $\dirpath{x}{y}{}$.)
 The main innovation here is to separate the derivation and encryption functions of $\kappa(x)$, meaning that knowledge of the object decryption key $\kappa(x)$ does not help in deriving $\kappa(y)$.
 (Of course, exposure of $\tau(x)$ will allow for the derivation of $\tau(y)$ and hence $\kappa(y)$.)
 
 Freire {\em et al.} introduce a security property called \emph{strong key indistinguishability}~\cite{FrPaPo13}, which we define formally in Fig.~\ref{fig:kist} and Definition~\ref{def:kist} (on page~\pageref{def:kist}).
 The adversary selects a vertex $x$ to attack and may then learn $\set{\sigma(y) : y \not\geqslant x}$ (as in the security model for key indistinguishability) and $\set{\kappa(y) : y \ne x}$; the adversary's task is to distinguish $\kappa(x)$ from random.
 They then define a CES for total orders that has the property of strong key indistinguishability, in which a key $\kappa(x)$ is derived from $\sigma(x)$ using a pseudorandom function and $\sigma(y)$ is directly derivable from $\sigma(x)$ only if $y \lessdot x$.
 Finally, they demonstrate how this CES can be extended to arbitrary posets using the chain partition construction described in Sec.~\ref{subsec:chain-partitions}.

 \section{Tree-Based Enforcement Schemes}\label{sec:tbes}
 
 In this work, we are interested in enforcing an information flow policy, defined in terms of the Hasse diagram of a partially ordered set $(X,\leqslant)$, using cryptographic primitives.
 We may enforce the policy in any way we see fit.  
 We may, for example, increase the number of arcs (by including some subset of the transitive arcs), thereby decreasing the lengths of the directed paths in the graph and the number of key derivations that are required.  
 Thus there is a trade-off between (increasing) the number of arcs and (decreasing) the amount of storage required for public information.  
 In particular, we could include all transitive arcs, so that all paths are of length $1$ (as in the direct scheme).  
 Alternatively, we may increase the number of keys given to each user and reduce the derivation time (keeping the number of arcs constant).  
 This corresponds to allowing the user to start from multiple points in the graph.  
 
 In practice, there may be constraints that will dictate what kind of cryptographic enforcement schemes will be appropriate.
 There may be constraints, for example, on the computational power and/or storage of the end-user devices; or it may not be possible to provide an on-line server to store public information.
 As noted in Table~\ref{tbl:basic-schemes}, there are four parameters that are likely to be of interest: $k$, $K$, $p$, and $d$.
 We may wish to minimize or impose an upper bound on one or more of these parameters.
 Certain choices have been well studied, particularly those for which $k = 1$ (when each user is given exactly one key and $\eofh \subseteq E(G) \subseteq \eofg$).
%
%
 Alternatively, we can eliminate public information (by ensuring that every node has at most one in-arc), at the expense of an increase in the number of keys assigned to each vertex.
 It is these types of schemes that we consider in the remainder of this paper.
 In particular, we consider the problem of minimizing $K$,  the total number of keys required.
 
 In the special case that the Hasse diagram $H = (X,\eofh)$ is a spanning out-tree, we may use simpler cryptographic primitives to enforce an information flow policy.
 Specifically, we know there is a unique directed path from $x$ to $y$ whenever $y < x$.
 Hence, for all $x,y \in X$ such that $y \lessdot x$, we define $\kappa(y)$ to be $F(\kappa(x) \concat y)$, where $F$ is an appropriate one-way function~\cite{Sa88} and $\concat$ denotes string concatenation.
 In other words, keys are determined by the vertices, rather than the arcs, through which a directed path passes.
 In this case, we require no public information (apart from a description of the poset), because keys are derived only from a (secret) key and a (public) vertex label.

 In general, of course, $H$ is not an out-tree.
 We may assume without loss of generality, however, that our poset has a maximum element.  
 If $(X,\leqslant)$ has more than one maximal element then we add a new element to $X$ which is defined to be greater than all elements in $X$.
 (In this case, no user or object would be assigned to such an element.)
 Thus, we may assume that $H^*$ has only one vertex of in-degree zero and so has a spanning out-tree~\cite[Prop. 1.7.1]{BaGu02}.

 \subsection{Constructing an Enforcement Scheme}

 In this paper, then, we investigate ways of constructing a spanning out-tree from $H^* = (G,\eofg)$ (in order to eliminate the need for public information) by selecting an arc set that is a subset of $\eofg$. 
 However, we have to ``repair'' the Hasse diagram by allocating some users more than one key (because some of the paths will have been ``broken'' by the deletion of arcs).  
 Thus it is interesting to consider how to select the arcs for deletion in such a way that the increase in the number of keys is minimized (either on a per-vertex basis or in total).
 
 Figure~\ref{fig:example-tree-edge-deletion} illustrates three out-trees derived from the poset in Fig.~\ref{subfig:hasse-diagram-illustrative}.
 Removing arcs to create an out-tree inevitably means that certain paths are broken.
 The out-tree in Fig.~\ref{subfig:subtree-good-for-h}, for example, means that a user associated with vertex $h$ only requires a single key and derivation requires no more than one hop.
 However, every other vertex (except $a$) requires additional keys in order to bridge the gaps.
 The above observations motivate the following definition.
 
  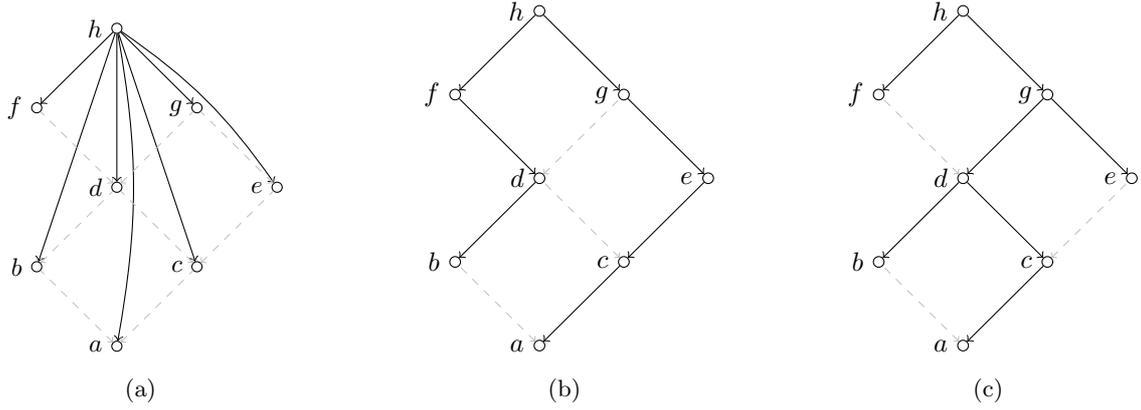
\begin{figure}[t]\centering
  \begin{subfigure}[b]{.3\textwidth}\centering 
    \begin{tikzpicture}[v/.style={circle,draw,fill=white,inner sep=0pt,minimum width=4pt},<-,de/.style={dashed,very thin,color=gray!50,<-},scale=.95,transform shape]
      \node[v,label=left:$a$] (a) {};
      \node[v,above left=of a,label=left:$b$] (b) {};
      \node[v,above right=of a,label=left:$c$] (c) {};
      \node[v,above right=of b,label=left:$d$] (d) {};
      \node[v,above right=of c,label=left:$e$] (e) {};
      \node[v,above left=of d,label=left:$f$] (f) {};
      \node[v,above right=of d,label=left:$g$] (g) {};
      \node[v,above left=of g,label=left:$h$] (h) {};
      \draw (a) to[bend right=10] (h);
      \draw (b) -- (h);
      \draw (c) -- (h);
      \draw (d) -- (h);
      \draw (e) to[bend right=10] (h);
      \draw (f) -- (h);
      \draw (g) -- (h);
      \draw[de] (a) -- (b);
      \draw[de] (a) -- (c);
      \draw[de] (b) -- (d);
      \draw[de] (c) -- (d);
      \draw[de] (c) -- (e);
      \draw[de] (d) -- (f);
      \draw[de] (d) -- (g);
      \draw[de] (e) -- (g);
    \end{tikzpicture}
    \caption{}\label{subfig:subtree-good-for-h}
  \end{subfigure}
  \hfill
  \begin{subfigure}[b]{.3\textwidth}\centering 
    \begin{tikzpicture}[v/.style={circle,draw,fill=white,inner sep=0pt,minimum width=4pt},<-,de/.style={dashed,very thin,color=gray!50,<-}]
      \node[v,label=left:$a$] (a) {};
      \node[v,above left=of a,label=left:$b$] (b) {};
      \node[v,above right=of a,label=left:$c$] (c) {};
      \node[v,above right=of b,label=left:$d$] (d) {};
      \node[v,above right=of c,label=left:$e$] (e) {};
      \node[v,above left=of d,label=left:$f$] (f) {};
      \node[v,above right=of d,label=left:$g$] (g) {};
      \node[v,above left=of g,label=left:$h$] (h) {};
      \draw[de] (a) -- (b);
      \draw (a) -- (c);
      \draw (b) -- (d);
      \draw[de] (c) -- (d);
      \draw (c) -- (e);
      \draw (d) -- (f);
      \draw[de] (d) -- (g);
      \draw (e) -- (g);
      \draw (f) -- (h);
      \draw (g) -- (h);
    \end{tikzpicture}
    \caption{}
  \end{subfigure}
  \hfill
  \begin{subfigure}[b]{.3\textwidth}\centering
    \begin{tikzpicture}[v/.style={circle,draw,fill=white,inner sep=0pt,minimum width=4pt},<-,de/.style={dashed,very thin,color=gray!50,<-}]
      \node[v,label=left:$a$] (a) {};
      \node[v,above left=of a,label=left:$b$] (b) {};
      \node[v,above right=of a,label=left:$c$] (c) {};
      \node[v,above right=of b,label=left:$d$] (d) {};
      \node[v,above right=of c,label=left:$e$] (e) {};
      \node[v,above left=of d,label=left:$f$] (f) {};
      \node[v,above right=of d,label=left:$g$] (g) {};
      \node[v,above left=of g,label=left:$h$] (h) {};
      \draw[de] (a) -- (b);
      \draw (a) -- (c);
      \draw (b) -- (d);
      \draw (c) -- (d);
      \draw[de] (c) -- (e);
      \draw[de] (d) -- (f);
      \draw (d) -- (g);
      \draw (e) -- (g);
      \draw (f) -- (h);
      \draw (g) -- (h);
    \end{tikzpicture}
    \caption{}\label{subfig:best-spanning-out-tree}
    \end{subfigure}
    \caption{Spanning out-trees derived from the poset in Fig.~\ref{fig:example-poset} by arc deletion}\label{fig:example-tree-edge-deletion}
  \end{figure}  

 \begin{definition}
  Given an information flow policy $(X,\leqslant)$, $E(T) \subseteq X \times X$ defines a \emph{derivation out-tree} $T = (X,E(T))$ if
  \begin{inparaenum}[(i)]
   \item $T$ is a spanning out-tree;
   \item $xy \in E(T)$ implies $y < x$.
  \end{inparaenum}
 \end{definition}

 \begin{lemma}\label{rem:constructing-spanning-out-tree}
  Let $D=(V,E)$ be an acyclic digraph with only one vertex $r$ of in-degree zero. 
  Then by selecting one in-bound arc for each vertex $x \ne r$ we obtain a spanning out-tree of $D$.
  Furthermore, any spanning out-tree of $D$ can be constructed in this way.
 \end{lemma}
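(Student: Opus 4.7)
My plan is to establish both directions directly from the structural definition of an out-tree given in Section~2.1. First I would verify that selecting one in-arc $e_x \in E$ for each $x \ne r$ yields a digraph $T = (V, \set{e_x : x \ne r})$ that satisfies the easy requirements (spanning, correct in-degrees, acyclicity), and then handle the only non-trivial point: that $r$ reaches every vertex of $T$.

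The structural checks are essentially by construction: $T$ spans $V$; every non-root vertex has in-degree exactly $1$ in $T$ by choice, while $r$ has in-degree $0$ because $D$ itself contains no in-arc to $r$; and $T$ inherits acyclicity from $D$. Observe also that every $x \ne r$ has at least one in-arc in $D$ (otherwise it would be a second vertex of in-degree zero), so the selection is always well-defined.

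The main (mild) obstacle is reachability from $r$. My plan is to fix any $v_0 \ne r$ and iteratively build the backward sequence $v_0, v_1, v_2, \dots$, where $v_{i+1}$ is the unique in-neighbour of $v_i$ in $T$. Since $T \subseteq D$ and $D$ is acyclic, no vertex can repeat in this sequence, and since $V$ is finite the sequence must terminate at some $v_k$ with no in-arc in $T$. The only such vertex is $r$, so reversing the sequence yields a directed path from $r$ to $v_0$ in $T$. Together with the structural checks, this shows $T$ is a spanning out-tree of $D$ rooted at $r$.

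For the converse, given any spanning out-tree $T'$ of $D$, the definition of an out-tree forces every non-root vertex to have in-degree exactly $1$ in $T'$; designating that single in-arc as the selection for each $x \ne r$ exhibits $T'$ as produced by the stated procedure. This direction reduces to a one-line observation, so the real content of the lemma lies entirely in the reachability argument of the third paragraph.
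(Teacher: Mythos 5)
Your proposal is correct and follows essentially the same route as the paper: the paper also verifies acyclicity and in-degrees directly and then establishes reachability from $r$ by a backward-walk argument (phrased there as taking a longest directed path terminating at an arbitrary vertex, which must start at $r$), and handles the converse by the same one-line in-degree observation. Your explicit remark that every $x \ne r$ has at least one in-arc in $D$ is a small point the paper leaves implicit, but the substance is identical.
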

  
 \begin{proof}
  First, let us prove that $T$ is a spanning out-tree.
  Clearly, $T$ has no directed cycle and every vertex of $x \neq r$ has in-degree $1$.
  It remains to show that $T$ is connected and contains $r$.
  Consider a vertex $y_1 \neq r$ and a longest directed path of $T$ terminating at $y_1$: $P=y_t y_{t-1} \dots y_1$.
  Since $T$ has no directed cycle all vertices of $P$ are distinct and since $P$ is longest, $y_t = r$.
  Thus, every vertex of $T$ is reachable from $r$ showing that $T$ is connected and contains $r$.

  Now let $T$ be a spanning out-tree. 
  Note that for every vertex $x \neq r$ there is exactly one arc to $x$. 
  Thus, $T$ can be constructed by the procedure of the lemma. \begin{lncs}\qed\end{lncs}
 \end{proof}
 
  If $T = (X,E)$ is a derivation out-tree and $x \ngtr u$, then $\nodirpath{x}{u}{T}$. 
  However, we may have $u < x$ but $\nodirpath{x}{u}{T}$.
  Thus, the problem with a derivation out-tree, in the context of cryptographic enforcement schemes, is that some authorized labels will no longer be reachable.
  Accordingly, we extend the notion of a derivation out-tree to a tree-based enforcement scheme.

 \begin{definition}
  Given an information flow policy $(X,\leqslant)$, a \emph{tree-based enforcement scheme} is a pair $(T,\phi)$, where $T$ is a derivation out-tree and $\phi : X \rightarrow 2^X$ is a \emph{key allocation function} such that:
  \begin{itemize}
   \item $x \in \phi(x)$;
   \item if $u \leqslant x$ then there exists $z \in \phi(x)$ such that $\dirpath{z}{u}{T}$;
   \item if $u \not\leqslant x$ then for all $z \in \phi(x)$, $\nodirpath{z}{u}{T}$.
  \end{itemize}
 \end{definition}
 
 In a tree-based enforcement scheme $(T,\phi)$, directed paths in $T$ are used to derive secrets (and hence keys): $E(T)$ determines the paths and $\phi$ determines the starting points of those paths (and hence the set of secrets that should be given to each user).
 In particular, $\phi(x) \setminus \set{x}$ is a set of vertices that were reachable from $x$ in $H^*$ that are no longer reachable in $T$.
 Thus, informally, $\phi(x)$ identifies a set of starting places in $T$ from which all (and only those) nodes that were accessible in $(X,\leqslant)$ from $x$ remain accessible in $T$, and $\card{\phi(x)} - 1$ is the number of \emph{additional} secrets that will be required by a user with security label $x$.
 
 Given a poset $(X,\leqslant)$ with maximum element $r$ and a derivation out-tree $T = (X,E)$, define $\phi_E : X \rightarrow 2^X$, where 
 \[
  \phi_E(x) = %
   \begin{cases}
    \set{x} & \text{if $x = r$}, \\
    \set{z \in X : \exists y \in X\ \text{such that}\ yz \in E, x \geqslant z, x \not\geqslant y} & \text{otherwise.}
   \end{cases}
 \]
 We now establish that $\phi_E$ is the ``best'' tree-based enforcement scheme.
 First, we show that $(T,\phi_E)$ is indeed a tree-based enforcement scheme.
 We then show that for a given tree $T = (X,E)$, any tree-based enforcement scheme $(T,\phi)$, and any $x \in X$, $\phi(x) \supseteq \phi_E(x)$.
 
 \begin{lemma}
  For any poset $(X,\leqslant)$ and any derivation out-tree $T = (X,E)$, $(T,\phi_E)$ is a tree-based enforcement scheme.
 \end{lemma}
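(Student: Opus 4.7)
The plan is to verify the three defining properties of a tree-based enforcement scheme for $(T, \phi_E)$ one by one, exploiting two structural facts: that $T$ is an out-tree rooted at the (unique) maximum element $r$, so every vertex has a unique directed path from $r$ in $T$; and that every arc $yz \in E(T)$ satisfies $z < y$, so any directed path in $T$ strictly descends with respect to $\leqslant$. The case $x = r$ is handled separately and is essentially immediate from $\phi_E(r) = \set{r}$ together with reachability of every vertex from the root.

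For the membership condition $x \in \phi_E(x)$ when $x \ne r$, I would take the unique in-arc $yx \in E$ at $x$ (it exists and is unique by Lemma~\ref{rem:constructing-spanning-out-tree}), and observe that $x \geqslant x$ while $x \not\geqslant y$ (since $y > x$ strictly). This witnesses $x \in \phi_E(x)$.

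The reachability condition is the part requiring most care. Given $u \leqslant x$ with $u \ne x$, I would consider the unique directed path in $T$ from $r$ to $u$, say $r = v_0 v_1 \cdots v_k = u$, and look along it for the smallest index $i$ with $x \geqslant v_i$. Such an $i$ exists because $x \geqslant v_k = u$, and $i \geqslant 1$ because $x \not\geqslant r = v_0$ (using $x \ne r$ and maximality of $r$). Then the arc $v_{i-1} v_i \in E$, together with $x \geqslant v_i$ and $x \not\geqslant v_{i-1}$, shows $v_i \in \phi_E(x)$, and the tail $v_i v_{i+1} \cdots v_k$ gives $\dirpath{v_i}{u}{T}$.

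For the unreachability condition, suppose for contradiction some $z \in \phi_E(x)$ satisfies $\dirpath{z}{u}{T}$ while $u \not\leqslant x$. Every arc of $T$ strictly descends in $\leqslant$, so a directed path $z \rightsquigarrow_T u$ forces $u \leqslant z$. On the other hand, membership $z \in \phi_E(x)$ forces $z \leqslant x$ directly from the definition of $\phi_E$. Transitivity then yields $u \leqslant x$, the desired contradiction. The main subtlety to guard against is the boundary case $x = r$, where $\phi_E(x) = \set{x}$ behaves differently; but there the three conditions are all trivial (the second uses that $T$ spans, rooted at $r$, and the third is vacuous since $u \not\leqslant r$ is impossible).
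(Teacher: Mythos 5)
Your proof is correct and follows essentially the same route as the paper's: witnessing $x \in \phi_E(x)$ via the unique in-arc at $x$, locating the first vertex on the root-to-$u$ path that $x$ dominates to establish reachability, and deriving $u \leqslant z \leqslant x$ for the contradiction in the unreachability case. The only (immaterial) differences are the direction in which you index the path and your explicit handling of the $x = r$ boundary, which the paper absorbs into the case ``$z_i = x$ for some $i$.''
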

 
 \begin{proof}
  We first show that $x \in \phi_E(x)$.
  This is trivially the case for $x = r$.
  If $x$ is not the root vertex, there exists $y \in X$ such that $yx \in E$ (since $T$ is a derivation out-tree).
  Moreover, $x \geqslant x$ and $x \not\geqslant y$ (since $yx \in E$ implies $x < y$).
  Hence, by definition, $x \in \phi_E(x)$.
  
  Now consider the case $u < x$.
  Since $T$ is a derivation out-tree, there exists a path $z_\ell z_{\ell - 1} \dots z_0$ in $T$, with $r = z_\ell$, $u = z_0$ and $\ell > 0$.
  If $z_i = x$ for some $i$ then we are done (since $x \in \phi_E(x)$).
  Hence, we may assume that $z_i \ne x$ for all $i$.
  However, there exists a smallest integer $m < \ell$ such that $x \geqslant z_m$ and $x \not\geqslant z_{m+1}$.
  (If no such integer existed, we would have to conclude $r > x$.)
  By definition, $z_m \in \phi_E(x)$ and also $\dirpath{z_m}{u}{T}$.

  Finally, consider the case $u \not\leqslant x$ and suppose (in order to obtain a contradiction) there exists $z \in \phi_E(x)$ such that $\dirpath{z}{u}{T}$.
  Then $u \leqslant z$ (by definition of a derivation out-tree and $\dirpath{}{}{T}$) and  $z \leqslant x$ (by definition of $\phi_E(x)$).
  By transitivity, $u \leqslant x$, the desired contradiction.\begin{lncs}\qed\end{lncs}
 \end{proof}

 \begin{lemma}
  For any tree-based enforcement scheme $(T = (X,E),\phi)$ and every vertex $x \in X$, $\phi(x) \supseteq \phi_E(x)$.
 \end{lemma}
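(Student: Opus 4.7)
The plan is to unpack the definition of $\phi_E(x)$ and then use the three defining properties of a tree-based enforcement scheme to force membership in $\phi(x)$. Fix $x \in X$ and $z \in \phi_E(x)$; I want to show $z \in \phi(x)$.

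First, I would dispose of the trivial case $x = r$: here $\phi_E(x) = \{x\}$, and the first axiom of a tree-based enforcement scheme gives $x \in \phi(x)$ directly. So I can assume $x \ne r$, in which case $z \in \phi_E(x)$ means there exists $y \in X$ with $yz \in E$, $x \geqslant z$, and $x \not\geqslant y$.

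Next, I apply the second axiom with $u := z$: since $z \leqslant x$, there exists $z' \in \phi(x)$ with $\dirpath{z'}{z}{T}$. The main step is to show $z' = z$, which will give $z \in \phi(x)$ as required. Suppose for contradiction that $z' \ne z$, so the directed path from $z'$ to $z$ in $T$ has length at least one. Since $T$ is an out-tree, $z$ has a unique in-neighbour, and by assumption $yz \in E$, so this in-neighbour is $y$. Therefore the path from $z'$ to $z$ must pass through $y$ as its penultimate vertex, yielding $\dirpath{z'}{y}{T}$.

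Finally I invoke the third axiom with $u := y$: since $x \not\geqslant y$, i.e.\ $y \not\leqslant x$, no element of $\phi(x)$ can reach $y$ via a directed path in $T$. This contradicts $\dirpath{z'}{y}{T}$ with $z' \in \phi(x)$. Hence $z' = z$ and $z \in \phi(x)$. The only subtle point, and really the ``hard'' part, is noticing that the uniqueness of the in-neighbour of $z$ in the out-tree forces any nontrivial path ending at $z$ to pass through the specific witness $y$ from the definition of $\phi_E(x)$; once this is observed, the third axiom supplies the contradiction immediately.
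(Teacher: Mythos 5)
Your proof is correct and follows essentially the same route as the paper's: both hinge on observing that any nontrivial directed path in the out-tree ending at $z$ must pass through its unique in-neighbour $y$, and then using $x \not\geqslant y$ to rule this out. The only cosmetic difference is that you apply the third axiom directly to $u = y$, whereas the paper derives $y \leqslant t \leqslant x$ and contradicts $x \not\geqslant y$ by transitivity.
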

 
 \begin{proof}
  Clearly $\phi(r) \supseteq \phi_E(r)$, by definition.
  Given $x \ne r$, suppose (in order to obtain a contradiction) that $z \in \phi_E(x)$ and $z \not\in \phi(x)$.
  Then, by definition of $\phi_E$, there exists $y \in X$ such that $yz \in E$, $x \geqslant z$ and $x \not\geqslant y$.
  Now, since $z \leqslant x$ and $(T,\phi)$ is an enforcement scheme, there exists $t \in \phi(x)$ such that $\dirpath{t}{z}{T}$.
  Hence $\dirpath{t}{y}{T}$ (since $T$ is a tree and $yz \in E$).
  Therefore, $y \leqslant t$ and $t \leqslant x$, since $(T,\phi)$ is an enforcement scheme and $\dirpath{t}{t}{T}$.
  By transitivity, $x \geqslant y$ (the desired contradiction).\begin{lncs}\qed\end{lncs}
 \end{proof}

 Thus, for a given tree $T$, $(T,\phi_E)$ is the enforcement scheme that minimizes, for each $x \in X$, the number of secrets required by a user assigned to $x$.
 Hence, for a given derivation out-tree $T = (X,E)$, it is reasonable to assume that we will always use the enforcement scheme $(T,\phi_E)$.
 Accordingly, we define
 \[
  K(T) = \sum_{x \in X} \card{\phi_E(x)}.
 \]
 That is $K(T)$ represents the total number of secrets required by a tree-based enforcement scheme based on the derivation out-tree $T$.
 Note also that $\card{\phi_E(x)}$ denotes the number of secrets required by a user assigned to security label $x$.
 Henceforth, given a derivation out-tree $T = (X,E)$, we will assume we will use the enforcement scheme $(T,\phi_E)$.
 Accordingly, we will write $\phi$ in preference to $\phi_E$.
 
 Let $T = (X,E)$ be a derivation out-tree.
 Then, for $y,z \in X$ such that $yz \in E$, define 
 \[ 
  \gamma(yz) = \set{x \in X : x \geqslant z, x \not\geqslant y}. 
 \]
 As we will see in Lemma~\ref{pro:computation-time-for-phi} and Sec.~\ref{sec:minimizing-keys}, there is a strong connection between $\phi$ and $\gamma$, which we can use to compute a tree-based enforcement scheme efficiently.
 
 \begin{lemma}\label{pro:computation-time-for-phi}
  Let $(X,\leqslant)$ be an information flow policy and let $T = (X,E)$ be a derivation out-tree.
  Then $\phi$ can be computed in time $O(\card{X}^2)$.
 \end{lemma}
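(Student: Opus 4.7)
The plan is to exploit the equivalence $z \in \phi(x) \iff x \in \gamma(yz)$, where $y$ denotes the unique in-neighbour of $z$ in $T$ (uniqueness for non-root $z$ follows from Lemma~\ref{rem:constructing-spanning-out-tree}). This equivalence is immediate from the definitions: if $z \in \phi(x)$ then some $y' \in X$ satisfies $y'z \in E$, $x \geqslant z$, $x \not\geqslant y'$, but since every non-root vertex of $T$ has in-degree one, $y' = y$, and the two inequalities are precisely the conditions for $x \in \gamma(yz)$; conversely any $x \in \gamma(yz)$ witnesses $z \in \phi(x)$ through the arc $yz$.

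Given this, I would compute $\phi$ by the following algorithm. First initialise $\phi(x) := \set{x}$ for every $x \in X$; this is justified because $x \in \phi(x)$ always holds (for $x \neq r$ via the unique $T$-arc into $x$, as established in the lemma showing $(T, \phi_E)$ is a tree-based enforcement scheme, and for $x = r$ by definition). Then loop over each arc $yz \in E$ and, in an inner loop, scan every vertex $x \in X$, inserting $z$ into $\phi(x)$ whenever both $x \geqslant z$ and $x \not\geqslant y$ hold. After the double loop terminates, each $\phi(x)$ agrees with $\phi_E(x)$ by the equivalence above.

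For the running time, there are $\card{X} - 1$ arcs in $E$, and each inner scan performs $O(\card{X})$ work, assuming a single order-comparison can be answered in constant time; a one-off $O(\card{X}^2)$ precomputation of a comparability matrix (by reachability from each vertex in $H$ or $T$, augmented with transitive arcs) fits within the stated budget if this assumption is not available from the input representation. The total cost is therefore $O(\card{X}^2)$.

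The only real obstacle is verifying the equivalence $z \in \phi(x) \iff x \in \gamma(yz)$ cleanly, which boils down to the uniqueness of in-neighbours in the out-tree $T$; once that observation is in place, the algorithm and its analysis are entirely routine bookkeeping.
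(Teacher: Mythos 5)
Your proposal is correct and follows essentially the same route as the paper's proof: both reduce membership in $\phi(x)$ to the constant-time tests $x \geqslant z$ and $x \not\geqslant y$ for the unique in-arc $yz$ of each non-root $z$, precompute the comparability (adjacency) matrix of $H^*$ in $O(\card{X}^2)$ time, and run a double loop over $X$ and the $\card{X}-1$ arcs of $E$ (you merely swap which loop is outermost, which is immaterial). The observation that $z \in \phi(x)$ iff $x \in \gamma(yz)$ is the same connection the paper exploits later in Theorem~\ref{thm:number-of-keys-equals-sum-of-gammas}.
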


 \begin{proof}
  By definition, $\phi(x) = \set{z \in X :  \exists y \in X\ \text{such that}\ yz \in E, x \geqslant z, x \not\geqslant y}$, for any $x$ not equal to $r$ in $X$.
  Moreover, there is a single arc in $E$ of the form $yz$, for any $z \in X$, since $T$ is a derivation out-tree.
  Thus, an algorithm to compute $\phi$ comprises an outer loop which iterates through the elements of $X$ and an inner loop that iterates through the elements of $E$, where each iteration of the inner loop for arc $yz$ tests whether $x \geqslant z$ and $x \not\geqslant y$.
  We can compute the adjacency matrix of $H^*$ in time $O(\card{X}^2)$, which we can use to test whether $x \geqslant z$ (and $x \not\geqslant y$) in constant time. 
  Moreover, $\card{E} = \card{X} - 1$ (since every vertex except the root has in-degree $1$).
  Thus our algorithm runs in time $O(\card{X}^2)$.\begin{lncs}\qed\end{lncs}
%
%
 \end{proof}

\subsection{Generating Keys}  
\label{sec:scheme}
 
 We now describe how to instantiate a tree-based enforcement scheme for $(X,\leqslant)$, given a derivation out-tree $T = (X,E)$, using a pseudorandom function (PRF).
 The scheme is a natural extension of the one used by Freire {\em et al.} for total orders~\cite{FrPaPo13}.\footnote{In the special case of a total order, we obtain the scheme of Freire {\em et al}, modulo some differences in the choice of the second input to the PRF.}
 Let $\rho$ be a security parameter and $\prf\colon \dom \times \set{0,1}^* \rightarrow \dom$ be a PRF (as formally introduced in Section~\ref{sec:proof}).
 
 \begin{description}
  \item[$\setup$:] The inputs to the algorithm are $\rho$ and a derivation out-tree $T =(X,E)$ for $(X,\leqslant)$, with root vertex $r$.  
  
		   Select secret value $s(r)$ uniformly at random from $\dom$.
		   Set  
		    \begin{align}
		     \kappa(r) &\defeq \keygen{r} \\
		    \intertext{and, recursively, if $y$ is a child of vertex $x$ (in $T$), set}
		     s(y) &\defeq \secgen{x}{y} \\
		     \kappa(y) &\defeq \keygen{y}
		    \end{align}
		    Thus, for $xy \in E$, $s(y)$ is derived from $s(x)$ and the label of~$y$, while $\kappa(y)$ is derived from $s(y)$ and the label of~$y$.
		    
		    Finally, define $\sigma(x) = \set{s(y) : y \in \phi(x)}$. 
  \item[$\derive$:] Given $y$, $x$ and $\sigma(x)$, with $y \leqslant x$, there (uniquely) exists $z \in \phi(x)$ such that $\dirpath{z}{y}{T}$.
  
		    If $z = y$, then (since $s(z) \in \sigma(x)$), compute $\kappa(z)=\keygen{z}$.
		    If $z \ne y$, then for each intermediate vertex $t_i$ on the path $t_1 \dots t_m$ between $t_1=z$ and $t_m=y$, compute $s(t_i) = \secgen{t_{i-1}}{t_i}$.
		    Finally, compute $\kappa(y)=\keygen{y}$.
 \end{description}
Our method for generating secrets is illustrated in Fig.~\ref{fig:key-gen-spanning-out-tree}.

 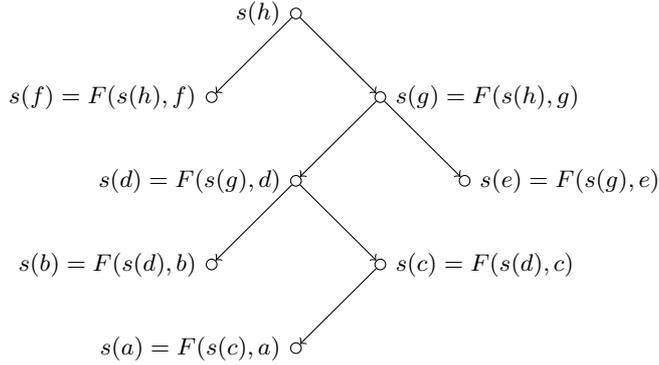
\begin{figure}[t]\centering
    \begin{tikzpicture}[v/.style={circle,draw,fill=white,inner sep=0pt,minimum width=4pt},<-]
    \node[v,label=left:{\small $s(a) = \secgen{c}{a}$}] (a) {};
    \node[v,above left=of a,label=left:{\small $s(b) = \secgen{d}{b}$}] (b) {};
    \node[v,above right=of a,label=right:{\small $s(c) = \secgen{d}{c}$}] (c) {};
     \node[v,above right=of b,label=left:{\small $s(d) = \secgen{g}{d}$}] (d) {};
    \node[v,above right=of c,label=right:{\small $s(e) = \secgen{g}{e}$}] (e) {};
    \node[v,above left=of d,label=left:{\small $s(f) = \secgen{h}{f}$}] (f) {};
    \node[v,above right=of d,label=right:{\small $s(g) = \secgen{h}{g}$}] (g) {};
    \node[v,above left=of g,label=left:{\small $s(h)$}] (h) {};
    \draw (a) -- (c);
    \draw (b) -- (d);
    \draw (c) -- (d);
    \draw (d) -- (g);
    \draw (e) -- (g);
    \draw (f) -- (h);
    \draw (g) -- (h);
  \end{tikzpicture}
  \caption{The secrets generated for the spanning-out-tree in Fig.~\ref{subfig:best-spanning-out-tree}}\label{fig:key-gen-spanning-out-tree}
 \end{figure}

\subsection{Security Analysis}\label{sec:proof} 

We start by specifying what we understand by a PRF.
Our definition is not the most general possible and is tailored to the requirements of our construction (as described in Sec.~\ref{sec:scheme}); specifically, we assume that the keyspace and range of the PRF are the same set.

\begin{definition}
  A \emph{pseudorandom function} $(\prf_\rho)_{\rho\in\NN}$ is a family of efficient functions $\prf_\rho\colon\keysp\times\set{0,1}^*\to\keysp$, where we understand $\rho$ as a security parameter and $\keysp=\set{0,1}^\rho$ as the keyspace.
\end{definition}

  We will usually write $\prf_{\rho,K}(x)$ to denote $\prf_\rho(K,x)$ for any $K\in\keysp$.
  To further simplify the notation, we will omit $\rho$ when no confusion can arise.
  We write $\cD^O\Rightarrow 1$ to denote a configuration where $\cD$ is a probabilistic poly-time Turing machine that has oracle access to a function $O$ and outputs a bit with value~$1$.

  \begin{definition}
  Given a pseudorandom function $\prf$, we define the \emph{advantage} of a distinguisher~$\cD$ to be   
  \[
  \Adv^\prf_\cD(\rho)=\left\lvert\Pr[K\getsr\keysp;\cD^{\prf_K(\cdot)}\Rightarrow 1]-\Pr[\varphi\getsr\langle\set{0,1}^*\to\keysp\rangle;\cD^{\varphi(\cdot)}\Rightarrow 1]\right\rvert\, ,
  \]
  where $\langle\set{0,1}^*\to\keysp\rangle$ denotes the universe of all functions mapping $\set{0,1}^*$ to~$\keysp$.
  We say $\prf$ is \emph{indistinguishable} from a random function if the advantage of any efficient distinguisher~$\cD$ is negligible.
\end{definition}

We next make precise the level of security that we target.
We refer to~\cite{AtBlFaFr09,FrPaPo13} for recent discussions and comparisons of security models that are specific enough to allow the analysis of CESs using the formalisms of provable security. 
We reproduce here the strongest model from~\cite{FrPaPo13}; that is, the one formalising the highest level of security, which is based on the security experiment $\Expt^{\kist,b}_{X,x,\cA}(1^\rho)$ defined in Fig.~\ref{fig:kist}.
We write $\bar{\sigma}$ and $\bar\kappa$ to denote, respectively, vectors that list the values $\sigma(x)$ and~$\kappa(x)$ for all $x\in X$.

\begin{figure}[t]
  \centering
    \begin{minipage}[t]{8.5cm}
      $\Expt^{\kist,b}_{X,x,\cA}(1^\rho)$:
    \vspace*{-1.5mm}
\setlist[enumerate,1]{
  align=left,
  leftmargin=0pt,itemindent=!,labelindent=0pt,labelwidth=2.2em,labelsep=0pt,
  label={\protect\makebox[1em][r]{\footnotesize\texttt{\protect\color{gray}\arabic{enumi}}}\hspace*{0.4em}},
  ref={\arabic{enumi}}
}
    \begin{enumerate}
    \item $({\sf Pub},\bar{\sigma},\bar{\kappa})\getsr\setup(1^\rho,(X,\leqslant))$
    \item $K_0\getsr\keysp$
      \label{lin:K0}
    \item $K_1\gets \kappa(x)$
      \label{lin:K1}
    \item $b'\getsr\cA((X,\leqslant),x,{\sf Pub},\ExpCorrupt_{X,x},\ExpKeys_{X,x},K_b)$
    \item Return $b'$
    \end{enumerate}
  \end{minipage}%
  \caption{Security experiment for strong key indistinguishability}
  \label{fig:kist}
\end{figure}

\begin{definition}
  \label{def:kist}
  Let $(X,\leqslant)$ be an arbitrary poset.
  A CES for $(X,\leqslant)$ is \emph{strongly key indistinguishable with respect to static adversaries} if, for all $x\in X$, the advantage of all efficient adversaries~$\cA$ that interact in experiment $\Expt^{\kist}_{X,x,\cA}$ is negligible, where we define
  \[
  \Adv^{\kist}_{X,x,\cA}(\rho)=\left\lvert\Pr\left[\Expt^{\kist,1}_{X,x,\cA}(1^\rho)\Rightarrow1\right]-\Pr\left[\Expt^{\kist,0}_{X,x,\cA}(1^\rho)\Rightarrow1\right]\right\rvert
  \]
  and set $\ExpCorrupt_{X,x}=\{\sigma(v): v\in X, x\not\leqslant v\}$ and $\ExpKeys_{X,x}=\{\kappa(v):v\in X\setminus\{x\}\}$.
\end{definition}

Observe that in this definition, and in contrast to other models discussed in~\cite{AtBlFaFr09,FrPaPo13}, the adversary obtains, in principle, \emph{all} secrets embedded in the system (that is, all $\sigma(x)$ and $\kappa(x)$ values), excluding only those that would allow distinguishing the target key by trivial means (e.g., by invoking the $\derive$ algorithm).%
  \footnote{A variant of Definition~\ref{def:kist} would consider dynamic adversaries: such an adversary is able to choose the challenge label $x$ \emph{during} the experiment, rather than having it fixed as one of the experiment's parameters. 
	    However, it has been shown that static and dynamic definitions of key indistinguishability are polynomially equivalent~\cite{FrPaPo13}.
	    To simplify the exposition, therefore, we restrict our attention to the static case.}

The final step of our analysis is to prove that our tree-based enforcement scheme from Sec.~\ref{sec:scheme} is strongly key indistinguishable.
Observe that this implies that our scheme is secure in all the models considered in~\cite{AtBlFaFr09,FrPaPo13}.
More formally, we have the following result.

\begin{theorem}\label{thm:security}
  Our tree-based enforcement scheme is strongly key indistinguishable in the sense of Definition~\ref{def:kist}.
  More precisely, for any poset $(X,\leqslant)$, $x\in X$, and efficient adversary~$\cA$, there exists a constant $0\leqslant c\leqslant\lvert X\rvert$ and efficient distinguishers $\cD^0_1,\ldots,\cD^0_c$, $\cD^1_1,\ldots,\cD^1_c$ against the underlying PRF such that
  \[
  \Adv^{\kist}_{X,x,\cA} \quad\leqslant\quad \Adv^\prf_{\cD^0_1}+\dots+\Adv^\prf_{\cD^0_c}+\Adv^\prf_{\cD^1_1}+\dots+\Adv^\prf_{\cD^1_c}
  \enspace.
  \]
\end{theorem}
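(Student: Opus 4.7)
The plan is to prove Theorem~\ref{thm:security} by a sequence of game hops along the unique directed path $P = y_0 y_1 \dots y_c$ in $T$ from the root $r = y_0$ to the challenge vertex $y_c = x$. Starting from $\Expt^{\kist,1}_{X,x,\cA}$ (where $K_b = \kappa(x)$) and ending at $\Expt^{\kist,0}_{X,x,\cA}$ (where $K_b$ is uniform), each hop will replace one PRF output by a uniform draw, justified by the pseudorandomness of $F$.

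The structural fact that drives the argument is that no vertex $y_i$ on $P$ is reachable in $T$ from any $z \in \phi(v)$ with $x \not\leqslant v$. Indeed, concatenating $\dirpath{z}{y_i}{T}$ with the sub-path $y_i \dots y_c$ would give $\dirpath{z}{x}{T}$, which by the third condition of the tree-based enforcement scheme definition forces $x \leqslant v$, a contradiction. In particular, every $s(y_i)$ is isolated from the values in $\ExpCorrupt_{X,x}$: the only places in the adversary's view that actually depend on $s(y_i)$ are the outputs of $F(s(y_i), \cdot)$ used to form $\kappa(y_i)$, $s(y_{i+1})$ (when $i < c$), and $s(w)$ for each sibling $w$ of $y_{i+1}$ in $T$.

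I would then construct the hybrids in two symmetric halves. In the first half, for $j = 0, 1, \dots, c$ in order, the function $F(s(y_j), \cdot)$ is replaced by a freshly sampled random function $\varphi_j$; all remaining values are computed honestly. Each such transition reduces to PRF security via a distinguisher $\cD^1_{j+1}$ that embeds its challenge key as $s(y_j)$ and queries its oracle at $y_j$ (for $\kappa(y_j)$), at $y_{j+1}$ (for $s(y_{j+1})$, if $j < c$), and at each other child of $y_j$ in $T$ (for its $s$-value). After the final replacement $\kappa(x) = \varphi_c(x)$ is a fresh uniform value, independent of the rest of the view, so it may be swapped for the ideal challenge $K_0$ at no cost; the second half unwinds the replacements in reverse using an analogous sequence of distinguishers $\cD^0_\bullet$. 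Since $P$ has at most $\lvert X\rvert$ vertices, the total number of PRF reductions is at most $2c \leqslant 2\lvert X\rvert$, matching the bound of the theorem.

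The principal obstacle is the simulation step: in each hop the distinguisher must reconstruct both $\ExpCorrupt_{X,x}$ and $\ExpKeys_{X,x}$ without ever holding $s(y_j)$ in the clear. This is where the structural observation is essential, as it guarantees that no $s$-value in any $\sigma(v) \in \ExpCorrupt_{X,x}$ lies in the subtree of $T$ influenced by $s(y_j)$; every required secret or key is therefore either an oracle answer or is produced from secrets that the reduction samples afresh for the subtrees branching off $P$. Once this book-keeping is justified, telescoping the PRF advantages across the two halves yields the inequality stated in the theorem.
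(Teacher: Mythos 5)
Your proposal is correct and follows essentially the same route as the paper's proof (which, as indicated, extends the Freire \emph{et al.} argument from chains to trees): a two-sided hybrid argument along the unique directed path in $T$ from the root to the challenge vertex $x$, replacing each $\prf(s(y_j),\cdot)$ along the path by a random function in turn, justified by the structural fact that no secret $s(y_j)$ on that path is reachable in $T$ from any $z\in\phi(v)$ with $x\not\leqslant v$, and with the two distinguisher families $\cD^1_\bullet$, $\cD^0_\bullet$ accounting for the descent from $\Expt^{\kist,1}$ to the idealized game and the ascent back to $\Expt^{\kist,0}$. One minor bookkeeping point: with your indexing $y_0y_1\dots y_c$ the path has $c+1$ vertices and hence $c+1$ replacements per half, so to match the stated bound you should take $c$ to be the number of vertices on the root-to-$x$ path (which is at most $\card{X}$), not the number of arcs.
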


 \section{Minimizing $K$ in a Tree-based Enforcement Scheme}\label{sec:minimizing-keys}
 
 So far, we have shown that it is possible to construct a tree-based enforcement scheme for an information flow policy $(X,\leqslant)$ that is strongly key indistinguishable.
 As we observed before, we will usually require our tree-based enforcement scheme to have some particular properties, such as minimizing the total number of keys or ensuring that all derivation paths are no longer than some threshold value.
 Hence, we require an algorithm to compute a derivation out-tree that satisfies the desired requirements, since, by Lemma~\ref{pro:computation-time-for-phi}, we can then compute the associated key allocation function $\phi$ in polynomial time.

 In this section, we consider two questions: how to minimize $K$, the total number of keys allocated to vertices (by the key allocation function $\phi$); and how to minimize $\widehat{K}$, the total number of keys distributed to users.
 The second question is interesting because, in practice, we might want to reduce the exposure of keys by ensuring that very few keys are associated with vertices to which many users are assigned.
 We solve both questions, demonstrating that it is surprisingly efficient to compute the required tree-based enforcement schemes in polynomial time.
 This is possible because of the connection between $\phi$ and $\gamma$, which leads to Theorem~\ref{thm:number-of-keys-equals-sum-of-gammas}.
 We then state and prove Theorem~\ref{thm:computing-tree-based-enf-scheme}, the main result of this section.
 
 Our basic approach is to define a weight for each arc in $\eofg$ and construct a minimum weight spanning out-tree.
 Accordingly, given an information flow policy $((X,\leqslant),\lambda,U,O)$, where \mbox{$\lambda : U \cup O \rightarrow X$}, let $U(x) = \set{u \in U : \lambda(u) = x}$, and let \mbox{$H = (X,\eofh)$} be the Hasse diagram of $X$.
 Then we define the \emph{weight function} \mbox{$\omega : \eofg \rightarrow \mathbb{N}$}, where
 \[
    \omega(yz) \defeq \sum_{x \in \gamma(yz)}\card{U(x)}.
 \]
  
  \begin{theorem}\label{thm:number-of-keys-equals-sum-of-gammas}
  Let $(T=(X,E),\phi)$ be any tree-based enforcement scheme for $(X,\leqslant)$.
  Then
  \[
   \sum_{\stackrel{x \in X}{x \ne r}} \card{U(x)} \cdot \card{\phi(x)} = \sum_{e \in E} \omega(e).
  \]
 \end{theorem}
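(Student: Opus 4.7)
The plan is to prove the identity by double counting, swapping the order of summation on the right-hand side and identifying the counts with $|\phi(x)|$.

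First I would expand the right-hand side using the definition of $\omega$:
\[
 \sum_{e \in E} \omega(e) \;=\; \sum_{yz \in E} \sum_{x \in \gamma(yz)} \card{U(x)} \;=\; \sum_{yz \in E}\ \sum_{\substack{x \in X \\ x \geqslant z,\ x \not\geqslant y}} \card{U(x)}.
\]
Swapping the order of summation, this equals
\[
 \sum_{x \in X} \card{U(x)} \cdot \card{\set{yz \in E : x \geqslant z,\ x \not\geqslant y}}.
\]
So it suffices to show that for every $x \ne r$ the cardinality appearing in this product equals $\card{\phi(x)}$, and that the term corresponding to $x = r$ vanishes.

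The key observation, which is really the heart of the argument, is that since $T$ is a derivation out-tree, Lemma~\ref{rem:constructing-spanning-out-tree} tells us that each vertex $z \ne r$ has exactly one in-arc in $E$, say $y_zz$. Consequently, the map $yz \mapsto z$ is a bijection between arcs of $E$ and vertices of $X\setminus\set{r}$. Restricting this bijection to the arcs $yz \in E$ for which $x \geqslant z$ and $x \not\geqslant y$ gives precisely the set $\phi(x)$ by definition of $\phi_E$. Hence for $x \ne r$, $\card{\set{yz \in E : x \geqslant z,\ x \not\geqslant y}} = \card{\phi(x)}$.

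For $x = r$, since $r$ is the (unique) maximum element we have $r \geqslant y$ for every $y \in X$, so the condition $r \not\geqslant y$ is never satisfied and the corresponding count is zero. Combining these facts yields
\[
 \sum_{e \in E} \omega(e) \;=\; \sum_{x \ne r} \card{U(x)} \cdot \card{\phi(x)},
\]
which is the required identity. The only mild subtlety lies in confirming the bijection between $\phi(x)$ and the relevant arc set and in treating the root case separately; beyond that, the proof is a straightforward exchange of summation order.
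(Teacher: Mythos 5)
Your proof is correct and is essentially the same double-counting argument as the paper's: both rest on the identity $\card{\phi(x)} = \card{\set{yz \in E : x \in \gamma(yz)}}$ for $x \ne r$ (which you justify slightly more explicitly via the one-in-arc-per-vertex bijection) together with the observation that $r$ contributes nothing, followed by an exchange of the order of summation. The only difference is cosmetic: the paper works from left to right while you work from right to left.
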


 \begin{proof}
  By definition, we have, for every $x \ne r$,
  \begin{align*}
   \card{\phi(x)} &= \card{\set{yz \in E : x \in \gamma(yz)}}
  \intertext{and so}
   \card{U(x)} \cdot \card{\phi(x)} &= \card{U(x)} \cdot \card{\set{yz \in E : x \in \gamma(yz)}}.
  \intertext{Hence}
   \sum_{\stackrel{x \in X}{x \ne r}} \card{U(x)} \cdot \card{\phi(x)} %
				 &= \sum_{\stackrel{x \in X}{x \ne r}} \card{U(x)} \cdot \card{yz \in E : x \in \gamma(yz)}
  \intertext{and, since $r \not\in \gamma(yz)$ for any $yz \in E$, we have}
   \sum_{\stackrel{x \in X}{x \ne r}} \card{U(x)} \cdot \card{\phi(x)} %
				 &= \sum_{yz \in E}  \sum_{x \in \gamma(yz)} \card{U(x)} = \sum_{yz \in E} \omega(yz).
  \end{align*}
  \begin{lncs}\qed\end{lncs}
 \end{proof}

 \begin{theorem}\label{thm:computing-tree-based-enf-scheme}
  Given an information flow policy $((X,\leqslant),U,O,\lambda)$, we can compute a tree-based enforcement scheme $(T,\phi)$ such that $\widehat{K}$ is minimized in time \mbox{$O(\card{\eofg} + \card{X}^2)$}.
 \end{theorem}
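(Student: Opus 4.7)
The plan is to reduce the optimization of $\widehat{K}$ to computing a minimum-weight spanning out-tree of $H^* = (X, \eofg)$ under the arc weights $\omega$, and then exploit a telescoping identity for $\omega$, together with the structure of $H^*$, to meet the claimed time bound.

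\textbf{Reduction.} By Theorem~\ref{thm:number-of-keys-equals-sum-of-gammas}, for any tree-based enforcement scheme $(T = (X, E), \phi)$ we have
\[
\widehat{K} \;=\; \card{U(r)} \;+\; \sum_{e \in E} \omega(e),
\]
since $\phi(r) = \set{r}$ contributes the constant $\card{U(r)}$ and the remaining sum equals $\sum_{e\in E}\omega(e)$ by the theorem. Hence minimizing $\widehat{K}$ is equivalent to minimizing $\sum_{e \in E} \omega(e)$ over derivation out-trees $T=(X,E)$ of $H^*$. Because $H^*$ is acyclic with a unique source~$r$, Lemma~\ref{rem:constructing-spanning-out-tree} shows that every spanning out-tree is obtained by selecting, independently for each $z \ne r$, exactly one incoming arc from $\eofg$; the optimum therefore picks, for each such $z$, the incoming arc of minimum $\omega$-weight.

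\textbf{Key identity.} The main obstacle is that evaluating $\omega(yz) = \sum_{x \in \gamma(yz)} \card{U(x)}$ naively costs $\Omega(\card{X})$ per arc, giving an overall $O(\card{X}\cdot\card{\eofg})$ time bound that is too slow. I would circumvent this via the observation that, for $yz \in \eofg$, we have $y > z$, so by transitivity $x \geqslant y$ implies $x \geqslant z$. Writing $W(x) \defeq \sum_{u \geqslant x} \card{U(u)}$, this yields
\[
\omega(yz) \;=\; \sum_{x \geqslant z} \card{U(x)} \;-\; \sum_{x \geqslant y} \card{U(x)} \;=\; W(z) - W(y),
\]
so every arc weight is a difference of two precomputed scalars.

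\textbf{Algorithm and complexity.} (i)~Compute the adjacency matrix of $H^*$ in $O(\card{X}^2)$ time, as already noted in the proof of Lemma~\ref{pro:computation-time-for-phi}. (ii)~Still within $O(\card{X}^2)$, compute $W(x)$ for every $x\in X$ by summing $\card{U(u)}$ along the appropriate row of the matrix. (iii)~Traverse $\eofg$ once, computing $\omega(yz) = W(z) - W(y)$ in $O(1)$ per arc and, for each $z \ne r$, recording the incoming arc of minimum weight; this step takes $O(\card{\eofg})$ time. (iv)~Let $E$ be the set of recorded arcs; by Lemma~\ref{rem:constructing-spanning-out-tree}, $T = (X, E)$ is a spanning out-tree, and by construction it minimizes $\sum_{e\in E}\omega(e)$. (v)~Finally compute $\phi = \phi_E$ in $O(\card{X}^2)$ via Lemma~\ref{pro:computation-time-for-phi}. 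Optimality of $\widehat{K}$ follows from the vertex-wise independence of the greedy choice together with the reduction above, and the total running time is $O(\card{\eofg} + \card{X}^2)$, as claimed.
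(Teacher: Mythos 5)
Your proposal is correct, and its overall skeleton coincides with the paper's: apply Theorem~\ref{thm:number-of-keys-equals-sum-of-gammas} to reduce minimizing $\widehat{K}$ to minimizing $\sum_{e\in E}\omega(e)$, use Lemma~\ref{rem:constructing-spanning-out-tree} to justify the independent per-vertex greedy choice of a minimum-weight in-arc, and finish with Lemma~\ref{pro:computation-time-for-phi} to obtain $\phi$. Where you genuinely depart from the paper is in the computation of $\omega$, and your route is the stronger one. The paper evaluates each $\omega(yz)$ by scanning all of $X$ and testing $x\geqslant z$, $x\not\geqslant y$, which as literally described costs $O(\card{\eofg}\cdot\card{X})$ over all arcs of $\eofg$; the asserted $O(\card{X}^2)$ for this step really only holds once one restricts attention to the arcs of an out-tree (or, via Corollary~\ref{cor:only-need-covering-relation-for-min-spanning-tree}, to $\eofh$). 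Your identity $\omega(yz)=W(z)-W(y)$ with $W(x)=\sum_{u\geqslant x}\card{U(u)}$ — valid because $y>z$ forces $\uset{y}\subseteq\uset{z}$, so $\gamma(yz)=\uset{z}\setminus\uset{y}$ — reduces every arc weight to an $O(1)$ difference of precomputed scalars and therefore delivers the stated $O(\card{\eofg}+\card{X}^2)$ bound cleanly, with no need to prune to Hasse arcs first. As a pleasant by-product, your identity shows that the containment in Lemma~\ref{lem:gamma-properties} and the inequality in Corollary~\ref{cor:weights-of-transitive-edges} actually hold with equality (the sum telescopes to $W(x_p)-W(x_1)$), which is slightly sharper than what the paper records.
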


 \begin{proof}
  For brevity, we write $E$ for $E(T)$.
  By Theorem~\ref{thm:number-of-keys-equals-sum-of-gammas}, 
  \[
   \widehat{K} = \card{U(r)} + \sum_{e \in E} \omega(e).
  \]
  An algorithm to compute the weight function $\omega$ iterates through the arcs in $\eofg$ and, for a given arc $yz$, iterates through all $x$ in $X$ testing whether $x \geqslant z$ and $x \not\geqslant y$.
  In other words, we swap the inner and outer loops in the algorithm used in the proof of Lemma~\ref{pro:computation-time-for-phi}.
  Thus, we can compute $\omega$ in time $O(\card{X}^2)$.
  
  Since $\card{U(r)}$ is fixed, we minimize $\widehat{K}$ by computing a derivation out-tree that minimizes $\sum_{e \in E} \omega(e)$.
  By Lemma~\ref{rem:constructing-spanning-out-tree}, we can achieve this by selecting, for each non-root vertex $x \in X$, the minimum weight arc to $x$, where the weights are given by $\omega$.
  We need only consider each arc (in $\eofg$) once, which takes time $O(\card{\eofg})$.
  The resulting set of arcs forms a spanning out-tree of minimum weight and the number of additional keys required is $\sum_{e \in E} \omega(e)$.
  We can derive the associated key allocation function in time $O(\card{X}^2)$, by Lemma~\ref{pro:computation-time-for-phi}; the result follows.\begin{lncs}\qed\end{lncs}
 \end{proof}
  
 \begin{corollary}\label{cor:reducing-run-time}
  Given an information flow policy $((X,\leqslant),U,O,\lambda)$, we can compute a tree-based enforcement scheme such that $K$ is minimized in time $O(\card{\eofg} + \card{X}^2)$.
 \end{corollary}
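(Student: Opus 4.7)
The strategy is to reduce the corollary to Theorem~\ref{thm:computing-tree-based-enf-scheme} by replacing the user-weighted counting with uniform counting. Observe that $K=\sum_{x\in X}\card{\phi(x)}$ is precisely the quantity that would arise from the formula in Theorem~\ref{thm:number-of-keys-equals-sum-of-gammas} if each $\card{U(x)}$ were replaced by $1$. Accordingly, I would introduce a modified weight function $\omega'\colon \eofg\rightarrow\NN$ defined by
\[
\omega'(yz)\defeq \card{\gamma(yz)}.
\]
An identical calculation to the proof of Theorem~\ref{thm:number-of-keys-equals-sum-of-gammas}, with $\card{U(x)}$ replaced throughout by~$1$, yields
\[
\sum_{\stackrel{x\in X}{x\ne r}}\card{\phi(x)} \;=\; \sum_{e\in E}\omega'(e),
\]
so that $K = \card{\phi(r)} + \sum_{e\in E}\omega'(e) = 1 + \sum_{e\in E}\omega'(e)$ for any derivation out-tree $T=(X,E)$.

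Since the constant term $1$ is independent of the tree, minimizing $K$ amounts to constructing a spanning out-tree of $H^*$ whose total $\omega'$-weight is minimum. The procedure is exactly the one used in the proof of Theorem~\ref{thm:computing-tree-based-enf-scheme}: first compute $\omega'$ by iterating over arcs in $\eofg$ and, for each arc $yz$, counting the $x\in X$ with $x\geqslant z$ and $x\not\geqslant y$ using the adjacency matrix of $H^*$ (which itself is computed in time $O(\card{X}^2)$); this costs $O(\card{X}\cdot\card{\eofg})$ naively, but as in Theorem~\ref{thm:computing-tree-based-enf-scheme} it can be reorganized so that the total cost of computing all $\omega'$-values is $O(\card{X}^2)$. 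Then by Lemma~\ref{rem:constructing-spanning-out-tree}, a minimum-weight spanning out-tree is obtained simply by selecting, for each non-root $x\in X$, a minimum-$\omega'$-weight in-arc at $x$, which costs $O(\card{\eofg})$. Finally, we compute $\phi$ from the resulting tree in time $O(\card{X}^2)$ by Lemma~\ref{pro:computation-time-for-phi}.

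The total running time is therefore $O(\card{\eofg}+\card{X}^2)$, as required. There is no significant obstacle here; the only point demanding any care is verifying that the counting identity of Theorem~\ref{thm:number-of-keys-equals-sum-of-gammas} goes through unchanged when $\card{U(x)}$ is uniformly replaced by $1$, which is immediate since the proof never used any property of the function $\card{U(\cdot)}$ other than that it assigns a number to each vertex. In other words, Theorem~\ref{thm:computing-tree-based-enf-scheme} is really a statement about minimizing a weighted sum $\sum_x w(x)\card{\phi(x)}$, and the corollary is the special case $w\equiv 1$.
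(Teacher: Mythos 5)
Your proposal is correct and is essentially the paper's intended argument: the corollary is the special case of Theorem~\ref{thm:computing-tree-based-enf-scheme} obtained by taking $\card{U(x)}=1$ for every $x$ (equivalently, replacing $\omega$ by $\omega'(yz)=\card{\gamma(yz)}$), whereupon $\widehat{K}$ coincides with $K$ and the same minimum-weight in-arc selection and complexity analysis apply verbatim.
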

%

 \begin{corollary}\label{cor:minimizing-leaves}
  We can find, in time $O(\card{\eofg}+\card{X}^{3/2}\card{\eofg}^{1/2})$, a minimum weight spanning out-tree that has the minimum number of leaves among such trees.   
 \end{corollary}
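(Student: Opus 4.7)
The plan is to reduce leaf-minimization to maximum bipartite matching. By the proof of Theorem~\ref{thm:computing-tree-based-enf-scheme}, a spanning out-tree of $(X,\eofg)$ has minimum weight if and only if, for each non-root vertex $z$, its in-arc $yz$ is a minimum-weight arc of $\eofg$ into $z$. Write $M(z)$ for the set of such minimum-weight candidate parents of $z$; every minimum-weight spanning out-tree arises by independently choosing some $y_z \in M(z)$ for each $z \ne r$. A vertex $v$ is a non-leaf of the resulting tree if and only if $v = y_z$ for some $z$, and the root $r$ is always a non-leaf (any spanning out-tree on $\card{X} \ge 2$ vertices has an arc out of $r$). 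Consequently, the leaf count equals $\card{X} - \card{\set{y_z : z \ne r}}$, so minimizing leaves amounts to maximizing the number of distinct parents used.

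I would form the bipartite graph $B = (S \cup T, E_B)$ with $S = X \setminus \set{r}$, $T = X$, and $(z,y) \in E_B$ if and only if $y \in M(z)$. The maximum number of distinct parents equals the maximum matching size $\mu^*(B)$: given a matching of size $k$, assign matched slots to their matched partners and assign unmatched slots arbitrarily within $M(z)$, thereby using at least $k$ distinct parents; conversely, from any assignment with $k$ distinct parents, selecting one representative slot per parent yields a matching of size $k$ in $B$.

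For the complexity, compute $\omega$ in $O(\card{X}^2)$ time (as in Theorem~\ref{thm:computing-tree-based-enf-scheme}), identify the sets $M(z)$ and construct $B$ in $O(\card{\eofg})$ time, and run a maximum bipartite matching algorithm (e.g.\ Hopcroft--Karp) on $B$, which has $O(\card{X})$ vertices and at most $\card{\eofg}$ edges, in $O(\card{X}^{3/2}\card{\eofg}^{1/2})$ time. After matching, assemble the out-tree in $O(\card{X})$ and compute the associated key allocation function $\phi$ in $O(\card{X}^2)$ via Lemma~\ref{pro:computation-time-for-phi}. Since the DAG is connected we have $\card{\eofg} \ge \card{X} - 1$, whence $\card{X}^2 = O(\card{X}^{3/2}\card{\eofg}^{1/2})$, and the overall running time collapses to $O(\card{\eofg} + \card{X}^{3/2}\card{\eofg}^{1/2})$.

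The main obstacle is making the matching reduction precise: one must verify that the minimum-weight spanning out-trees factor as independent choices $y_z \in M(z)$, and that the bipartite graph $B$ exactly captures which sets of distinct parents can be realised simultaneously across all $z \ne r$. Once this reduction is in place, the algorithm and its running time follow from standard bipartite matching results together with Lemma~\ref{pro:computation-time-for-phi}.
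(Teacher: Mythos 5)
Your proof is correct and follows essentially the same route as the paper: the paper restricts attention to the sub-digraph of minimum-weight in-arcs (using Lemma~\ref{rem:constructing-spanning-out-tree} to see that the minimum-weight spanning out-trees are exactly the independent per-vertex choices of such arcs) and then invokes the known \textsc{MinLeaf} algorithm for acyclic digraphs, which is precisely the bipartite-matching reduction you spell out --- minimizing leaves by maximizing the number of distinct parents --- with the stated running time coming from the matching step. The only difference is that you unpack that black box explicitly rather than citing it.
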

%
 
 It is useful to find a minimum weight spanning out-tree with a minimum number of leaves because the number of leaves will impose an upper bound on $\card{\phi(x)}$.
 Note, however, that $\card{\phi(x)}$ may be greater than the width of $X$ (and it is not difficult to construct such an example).
 This is because the set of arcs in the graph that is input to {\sc MinLeaf}~--~the algorithm used to construct the spanning out-tree~--~will, in general, be a strict subset of $\eofg$.
 Thus, the size of the maximal independent set in the graph that is input to {\sc MinLeaf} can exceed the width of the poset (which is the equal to the size of the maximal independent set in $G = (X,\eofg)$).  

 We now prove some further properties of $\gamma$.
 This enables us to reduce the running time of our algorithm because we show it is sufficient to consider only arcs in $\eofh$ (rather than $\eofg$) when constructing the minimum weight spanning out-tree.
 
 \begin{lemma}\label{lem:gamma-properties}
  Let $(X,\leqslant)$ be a partially ordered set.
  Then for all $x,y,z \in X$ such that $z <y < x$, 
    \[
     \gamma(xy) \cap \gamma(yz) = \emptyset \quad\text{and}\quad \gamma(xz) \supseteq  \gamma(yz) \cup \gamma(xy)
    \]
 \end{lemma}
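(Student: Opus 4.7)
The plan is to unfold the definition of $\gamma$ and argue both claims by direct set-theoretic reasoning, relying only on transitivity of $\leqslant$.

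For the first claim, I would observe that by definition $\gamma(xy) = \{w \in X : w \geqslant y,\ w \not\geqslant x\}$, whereas $\gamma(yz) = \{w \in X : w \geqslant z,\ w \not\geqslant y\}$. The membership criterion for $\gamma(xy)$ requires $w \geqslant y$, while the criterion for $\gamma(yz)$ forbids $w \geqslant y$, so no element can lie in both sets. This yields $\gamma(xy) \cap \gamma(yz) = \emptyset$ immediately.

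For the second claim I would handle the two inclusions $\gamma(yz) \subseteq \gamma(xz)$ and $\gamma(xy) \subseteq \gamma(xz)$ separately. Let $w \in \gamma(yz)$, so $w \geqslant z$ and $w \not\geqslant y$; I need to show $w \not\geqslant x$. If, to the contrary, $w \geqslant x$, then combined with $x > y$ and transitivity this gives $w \geqslant y$, contradicting $w \in \gamma(yz)$. Hence $w \in \gamma(xz)$. Now let $w \in \gamma(xy)$, so $w \geqslant y$ and $w \not\geqslant x$; combined with $y > z$ and transitivity I obtain $w \geqslant z$, and $w \not\geqslant x$ is already given, so $w \in \gamma(xz)$.

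There is no real obstacle here: the lemma is essentially a direct bookkeeping exercise on the defining conditions of $\gamma$, using only transitivity of the order and the hypothesis $z < y < x$. The only subtle point is keeping straight which of the two coordinates of an arc controls the $\geqslant$ condition and which controls the $\not\geqslant$ condition; once the definitions are written out carefully, both statements follow in one line each.
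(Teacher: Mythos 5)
Your proof is correct: unfolding $\gamma(ab)=\set{w\in X: w\geqslant b,\ w\not\geqslant a}$, the disjointness follows because membership in $\gamma(xy)$ requires $w\geqslant y$ while membership in $\gamma(yz)$ forbids it, and both inclusions into $\gamma(xz)$ follow from transitivity exactly as you argue. This is the same direct, definitional route the paper takes; no issues.
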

%
%
 
 \begin{corollary}\label{cor:weights-of-transitive-edges}
  Let $(X,\leqslant)$ be a partially ordered set with Hasse diagram \mbox{$H = (X,\eofh)$}.
  Then, for any path $x_1 x_2 \dots x_p$ in $H^*$, $p > 2$, we have 
    \[
     \omega(x_1 x_p) \geqslant \sum_{i=1}^{p-1} \omega(x_i x_{i+1}).
    \]
 \end{corollary}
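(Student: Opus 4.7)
}
The plan is to reduce the corollary to a single application of Lemma~\ref{lem:gamma-properties} at the level of weights and then iterate by induction along the path.  Note that the arcs of $H^*$ point from larger to smaller elements, so a path $x_1 x_2 \dots x_p$ in $H^*$ gives a descending chain $x_1 > x_2 > \dots > x_p$, which is exactly the hypothesis needed to invoke Lemma~\ref{lem:gamma-properties} on any triple $x_i > x_j > x_k$ along the path.

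First I would establish the three-vertex case as a direct lemma: for $z < y < x$ in $X$,
\[
\omega(xz) \;\geqslant\; \omega(xy) + \omega(yz).
\]
This is immediate from the definition $\omega(uv)=\sum_{w\in\gamma(uv)}\card{U(w)}$ together with the two clauses of Lemma~\ref{lem:gamma-properties}: the inclusion $\gamma(xz)\supseteq\gamma(xy)\cup\gamma(yz)$ gives $\omega(xz)\geqslant\sum_{w\in\gamma(xy)\cup\gamma(yz)}\card{U(w)}$, and the disjointness $\gamma(xy)\cap\gamma(yz)=\emptyset$ lets me split this sum into $\omega(xy)+\omega(yz)$ without double counting (and without any need for $|U(w)|$ to be nonnegative beyond what we already have).

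Then I would induct on the path length $p$.  The base case $p=3$ is exactly the inequality above.  For the inductive step with $p>3$, I apply the three-vertex case to the triple $x_1>x_{p-1}>x_p$, obtaining
\[
\omega(x_1 x_p) \;\geqslant\; \omega(x_1 x_{p-1}) + \omega(x_{p-1} x_p),
\]
and then use the induction hypothesis on the shorter path $x_1 x_2\dots x_{p-1}$ (which is still a path in $H^*$, since $H^*$ is transitively closed and $x_1>\dots>x_{p-1}$) to replace $\omega(x_1 x_{p-1})$ by $\sum_{i=1}^{p-2}\omega(x_i x_{i+1})$.  Adding the trailing $\omega(x_{p-1}x_p)$ yields the claim.

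I do not expect any serious obstacle: all the combinatorial content has been packaged into Lemma~\ref{lem:gamma-properties}, and the remainder is bookkeeping.  The only subtlety worth flagging in the write-up is verifying that each intermediate triple $(x_1,x_{p-1},x_p)$ satisfies the strict chain hypothesis of Lemma~\ref{lem:gamma-properties}, which follows at once from the definition of a directed path in $H^*$.
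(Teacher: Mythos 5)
Your proposal is correct and follows the same route the paper intends: the two clauses of Lemma~\ref{lem:gamma-properties} (inclusion plus disjointness, together with nonnegativity of the cardinalities $\card{U(w)}$) give the three-vertex inequality $\omega(xz)\geqslant\omega(xy)+\omega(yz)$, and induction on $p$ via the triple $x_1>x_{p-1}>x_p$ finishes the argument. The details you flag (strictness of the chain and the fact that the truncated path remains a path in $H^*$) are exactly the right ones to check, and they all hold.
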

%

 \begin{corollary}\label{cor:only-need-covering-relation-for-min-spanning-tree}
  Let $(X,\leqslant)$ be a partially ordered set with Hasse diagram \mbox{$H = (X,\eofh)$}.
  Then there exists a minimum weight spanning out-tree $T = (X,E)$ with $E \subseteq \eofh$.
 \end{corollary}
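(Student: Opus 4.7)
The plan is to prove the corollary by iteratively improving any minimum weight spanning out-tree until all of its arcs lie in $\eofh$. Concretely, I will show that if $T=(X,E)$ is a minimum weight spanning out-tree and some arc $xz \in E$ fails to be a covering relation, then we can swap $xz$ for a covering arc without increasing the total weight.

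First, I would fix $xz \in E$ with $xz \notin \eofh$, so that $z < x$ but $x$ does not cover $z$. Since $(X,\leqslant)$ is finite, there exists a saturated chain $x = x_1 \gtrdot x_2 \gtrdot \cdots \gtrdot x_p = z$ in $H$ with $p > 2$. Applying Corollary~\ref{cor:weights-of-transitive-edges} to this path, and observing that $\omega$ is non-negative (being a sum of non-negative cardinalities), I obtain $\omega(x_{p-1}z) \leqslant \sum_{i=1}^{p-1}\omega(x_i x_{i+1}) \leqslant \omega(xz)$. Thus the covering arc $x_{p-1} z$ is no heavier than the transitive arc $xz$.

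Next I would construct $T' = (X, (E \setminus \set{xz}) \cup \set{x_{p-1} z})$ and verify that $T'$ is a spanning out-tree. Removing $xz$ from $T$ splits $T$ into two components: the subtree $T_z$ rooted at $z$, and the remaining rooted subtree containing the root $r$. Every vertex of $T_z$ is reachable from $z$ in $T$, so (since arcs of $T$ go from larger to smaller poset elements) every vertex $v$ of $T_z$ satisfies $v \leqslant z$. Since $x_{p-1} > z$, the vertex $x_{p-1}$ lies outside $T_z$, so adding the arc $x_{p-1} z$ reconnects the two components into a spanning out-tree in which $z$ has exactly one in-arc. By the inequality above, $T'$ has weight at most that of $T$, so $T'$ is itself a minimum weight spanning out-tree, and $\card{E(T') \setminus \eofh} = \card{E \setminus \eofh} - 1$.

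Iterating this swap (or, equivalently, choosing at the outset a minimum weight spanning out-tree with the fewest non-covering arcs and deriving a contradiction if any remain) eliminates all non-covering arcs, producing a minimum weight spanning out-tree whose arc set is contained in $\eofh$. The main obstacle is really just verifying that the swap produces a valid spanning out-tree, which reduces to the observation that $T_z$ consists solely of poset elements $\leqslant z$ and therefore cannot contain $x_{p-1}$; everything else follows directly from Corollary~\ref{cor:weights-of-transitive-edges} and the non-negativity of $\omega$.
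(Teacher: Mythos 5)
Your proof is correct. It rests on the same key ingredient as the paper's argument, namely Corollary~\ref{cor:weights-of-transitive-edges} together with the non-negativity of $\omega$, which together show that for every non-root vertex $z$ and every non-covering in-arc $xz$ there is a covering in-arc $x_{p-1}z$ with $\omega(x_{p-1}z)\leqslant\omega(xz)$. Where you diverge is in how you turn this into the corollary: you run a generic exchange argument on an arbitrary minimum weight spanning out-tree, swapping one offending arc at a time and re-verifying by hand (via the two components of $T\setminus\set{xz}$ and the fact that every vertex of $T_z$ is $\leqslant z$) that the swap yields a spanning out-tree. The paper instead exploits Lemma~\ref{rem:constructing-spanning-out-tree}: since \emph{any} choice of one in-arc per non-root vertex of the acyclic single-source digraph $H^*$ yields a spanning out-tree, and the minimum weight spanning out-tree is obtained by independently choosing a minimum weight in-arc at each vertex (as in the proof of Theorem~\ref{thm:computing-tree-based-enf-scheme}), one may simply always choose that minimum to be realized by a covering arc, and the tree structure comes for free. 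Your component-based verification is sound but unnecessary work --- your modified arc set still selects exactly one in-arc for each non-root vertex, so Lemma~\ref{rem:constructing-spanning-out-tree} already certifies that $T'$ is a spanning out-tree. The exchange formulation does have the virtue of being the standard, more portable MST-style argument, and it makes explicit that \emph{every} minimum weight spanning out-tree can be transformed into one inside $\eofh$ without ever increasing the weight; but in this setting the vertex-by-vertex decomposition gives the result more directly.
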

%
 
 \begin{corollary}\label{cor:reducing-run-time-further}
  We can compute a tree-based enforcement scheme for information flow policy $(X,\leqslant)$ in time $O(\card{\eofh} + \card{X}^2)$.
 \end{corollary}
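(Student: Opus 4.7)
The plan is to combine Corollary~\ref{cor:only-need-covering-relation-for-min-spanning-tree} with the minimum-in-arc selection strategy of Theorem~\ref{thm:computing-tree-based-enf-scheme}, but restricting the search to arcs of $\eofh$ rather than $\eofg$. Since Corollary~\ref{cor:only-need-covering-relation-for-min-spanning-tree} guarantees that some minimum-weight spanning out-tree of $H^*$ has its arc set wholly in $\eofh$, this restriction is without loss of optimality. By Lemma~\ref{rem:constructing-spanning-out-tree}, picking a minimum-weight in-arc from $\eofh$ for every non-root vertex then produces a spanning out-tree, which is therefore of minimum weight among all spanning out-trees of $H^*$.

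Concretely, I would first compute the adjacency matrix of $H^*$ in time $O(\card{X}^2)$ (exactly as in the proof of Lemma~\ref{pro:computation-time-for-phi}), and then the totals $A(x) = \sum_{u \geqslant x}\card{U(u)}$ for all $x \in X$, again in $O(\card{X}^2)$. The key observation I would invoke is that for any $yz \in \eofh$ (indeed any $yz \in \eofg$) we have $y > z$, so $x \geqslant y$ automatically forces $x \geqslant z$; consequently $\omega(yz) = A(z) - A(y)$. This lets me evaluate $\omega$ on every arc of $\eofh$ in aggregate time $O(\card{\eofh})$. One further scan of $\eofh$ retains a minimum-weight in-arc for each non-root vertex, producing the desired derivation out-tree $T$ in an additional $O(\card{\eofh})$ time, and Lemma~\ref{pro:computation-time-for-phi} yields the associated key allocation function $\phi$ in $O(\card{X}^2)$.

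The only step that goes materially beyond invoking previous results is the identity $\omega(yz) = A(z) - A(y)$; this is what sidesteps the naive $\Theta(\card{X})$-per-arc cost that would otherwise dominate weight computation and yield only the weaker bound $O(\card{X}\cdot\card{\eofh})$. Summing the costs above gives the claimed $O(\card{\eofh} + \card{X}^2)$ running time.
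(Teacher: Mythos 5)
Your proposal is correct and takes essentially the same route as the paper: Corollary~\ref{cor:only-need-covering-relation-for-min-spanning-tree} licenses restricting the arc set to $\eofh$, after which one selects a minimum-weight in-arc for each non-root vertex (Lemma~\ref{rem:constructing-spanning-out-tree}) and recovers $\phi$ via Lemma~\ref{pro:computation-time-for-phi}. Your identity $\omega(yz)=A(z)-A(y)$ --- valid because $y>z$ gives $\set{x : x\geqslant y}\subseteq\set{x : x\geqslant z}$ --- is a welcome extra degree of precision, since it makes the per-arc weight evaluation genuinely $O(1)$ after $O(\card{X}^2)$ preprocessing, a point on which the paper's loop-counting argument for computing $\omega$ in Theorem~\ref{thm:computing-tree-based-enf-scheme} is looser.
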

%
 
 \begin{remark}
  In practice, we expect that $\card{U(x)} > 0$, although our proofs do not make this assumption.
  If we do make this assumption, it is possible to strengthen the statement in Corollary~\ref{cor:only-need-covering-relation-for-min-spanning-tree} and assert that a minimum weight spanning out-tree can \emph{only} contain arcs from the Hasse diagram.
 \end{remark}

 Fig.~\ref{fig:minimum-weight-spanning-tree} illustrates the construction of the minimum weight spanning out-tree for the poset in Fig.~\ref{fig:example-poset} (assuming there is a single user for each vertex).
 The weight on arc $ec$ is $3$, for example, because $\gamma(ec) = \set{c,d,f}$.
 (The effect of retaining arc $ec$ would be that $\kappa(c)$ would be required for each of $c$, $d$ and $f$.
 Equivalently, $c \in \phi(d)$ and $c \in \phi(f)$ if we were to choose $ec$ to belong to our derivation out-tree.)
 To construct a minimum weight spanning out-tree, we must select arcs $ca$ and $dc$ (and we select one or other of $fd$ and $gd$).
 One possible scheme, when $gd$ is retained rather than $fd$ is illustrated in Fig.~\ref{subfig:derivation-tree}; the scheme requires a total of $11$ keys, being the sum of the weights on the retained arcs plus an extra one for the root vertex.

 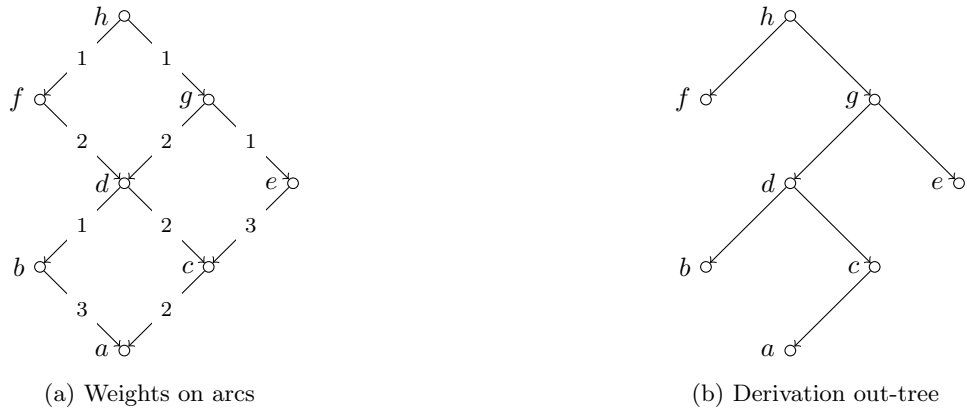
\begin{figure}[h]\centering
  \begin{subfigure}[b]{.45\textwidth}\centering
    \begin{tikzpicture}[v/.style={circle,draw,fill=white,inner sep=0pt,minimum width=4pt},<-]
      \node[v,label=left:$a$] (a) {};
      \node[v,above left=of a,label=left:$b$] (b) {};
      \node[v,above right=of a,label=left:$c$] (c) {};
      \node[v,above right=of b,label=left:$d$] (d) {};
      \node[v,above right=of c,label=left:$e$] (e) {};
      \node[v,above left=of d,label=left:$f$] (f) {};
      \node[v,above right=of d,label=left:$g$] (g) {};
      \node[v,above left=of g,label=left:$h$] (h) {};
      \draw (a) to node[fill=white] {\footnotesize $3$} (b);
      \draw (a) to node[fill=white] {\footnotesize $2$} (c);
      \draw (b) to node[fill=white] {\footnotesize $1$}  (d);
      \draw (c) to node[fill=white] {\footnotesize $2$} (d);
      \draw (c) to node[fill=white] {\footnotesize $3$} (e);
      \draw (d) to node[fill=white] {\footnotesize $2$} (f);
      \draw (d) to node[fill=white] {\footnotesize $2$} (g);
      \draw (e) to node[fill=white] {\footnotesize $1$} (g);
      \draw (f) to node[fill=white] {\footnotesize $1$} (h);
      \draw (g) to node[fill=white] {\footnotesize $1$} (h);
    \end{tikzpicture} 
   \caption{Weights on arcs}\label{subfig:hasse-diagram-weights}
  \end{subfigure}
 \hfill
 \begin{subfigure}[b]{.45\textwidth}\centering
  \begin{tikzpicture}[v/.style={circle,draw,fill=white,inner sep=0pt,minimum width=4pt},<-]
    \node[v,label=left:$a$] (a) {};
    \node[v,above left=of a,label=left:$b$] (b) {};
    \node[v,above right=of a,label=left:$c$] (c) {};
    \node[v,above right=of b,label=left:$d$] (d) {};
    \node[v,above right=of c,label=left:$e$] (e) {};
    \node[v,above left=of d,label=left:$f$] (f) {};
    \node[v,above right=of d,label=left:$g$] (g) {};
    \node[v,above left=of g,label=left:$h$] (h) {};
    \draw (a) -- (c);
    \draw (b) -- (d);
    \draw (c) -- (d);
    \draw (d) -- (g);
    \draw (e) -- (g);
    \draw (f) -- (h);
    \draw (g) -- (h);
  \end{tikzpicture}
  \caption{Derivation out-tree}\label{subfig:derivation-tree}
  \end{subfigure}
  \caption{The minimum weight derivation tree for Fig.~\ref{fig:example-poset}}\label{fig:minimum-weight-spanning-tree}
 \end{figure}
 
 
 \begin{remark}
  Our construction will almost always require fewer keys than a scheme based on chain partitions.
  This follows by noting that any vertex $x$, such that $x > y$, $x > z$ and $\set{y,z}$ is an antichain, necessarily requires (at least) two keys in a chain partition scheme, but this is not necessarily true of our construction (since the derivation tree may include many antichains).  
  Consider the chain partition in Fig.~\ref{subfig:chain-partition} and the derivation tree in Fig.~\ref{subfig:derivation-tree}.
  The former would require $13$ keys, while the latter only $11$.
 \end{remark}

 \section{Conclusion}\label{sec:conclusion}
 
 In this paper, we have introduced a new form of cryptographic scheme for the enforcement of information flow policies.
 Our scheme has the advantage that no public information is required for the derivation of decryption keys.
 Moreover, our tree-based scheme requires fewer keys (when $X$ is not a total order), compared to existing chain-based approaches, to enforce a given policy.
 Nevertheless, our scheme retains the strong security properties that have recently been established for chain-based schemes~\cite{FrPaPo13}.
 From a practical perspective, we provide an efficient algorithm for computing an optimal derivation tree, in the sense that it requires the smallest number of keys.
 This is in sharp contrast to chain-based approaches, which provide no guidance on how best to select a chain partition of the poset (of which there may be many) nor provide a way of computing the number of keys required for a given partition.
 Thus, there are particular practical advantages to using a tree-based approach.
 
 There are several interesting opportunities for future work.
 From a mathematical perspective, it would be interesting to establish the minimum total number of keys required by a chain-based scheme and, if possible, to quantify the benefits offered by a tree-based scheme.
 This is, however, likely to be non-trivial, as it is not clear that there exists a weight function for chain-based schemes that can be used to formulate a result analogous to Theorem~\ref{thm:number-of-keys-equals-sum-of-gammas}. 
 From a more practical perspective, it would be interesting to find an algorithm that can compute a derivation tree such that%
 \begin{inparaenum}[(i)]%
  \item no user requires more than $w$ keys, where $w$ is the width of the poset
  \item the total number of keys is as small as possible.
 \end{inparaenum}
 In particular, such a construction may be useful in scenarios where the user devices have limited secure storage for keys.
 Our preliminary work on this problem suggests that no efficient algorithm exists, but whether it is an NP-hard problem remains open.
 We also intend to investigate whether a forest-based enforcement scheme, which would share some of the characteristics of tree- and chain-based schemes, would offer advantages in terms of reducing%
 \begin{inparaenum}[(i)]%
  \item the maximum number of steps required for key derivation
  \item the administrative effort required following key revocation (since we can limit key updates to those vertices within a tree in the forest).
 \end{inparaenum}
 In Fig.~\ref{subfig:derivation-tree}, for example, we could delete arc $gd$ to yield a forest of two trees: each user assigned to vertex $h$ or $g$ would require an additional key ($\kappa(d)$) but worst-case key derivation would require two, rather than four, hops.

\paragraph{Acknowledgments.} 

BP was supported by EPSRC Leadership Fellowship EP/H005455/1, a Sofja Kovalevskaja Award of the Alexander von Humboldt Foundation, and the German Federal Ministry for Education and Research.

 \bibliography{refs}
 \bibliographystyle{abbrv}

\end{document}